\newtheorem{definition}{Definition}
\newtheorem{lemma}[definition]{Lemma}
\newtheorem{theorem}[definition]{Theorem}
\def\squareforqed{\hbox{\rlap{$\sqcap$}$\sqcup$}}
\def\qed{\ifmmode\squareforqed\else{\unskip\nobreak\hfil
\penalty50\hskip1em\null\nobreak\hfil\squareforqed
\parfillskip=0pt\finalhyphendemerits=0\endgraf}\fi}
\def\endenv{\ifmmode\;\else{\unskip\nobreak\hfil
\penalty50\hskip1em\null\nobreak\hfil\;
\parfillskip=0pt\finalhyphendemerits=0\endgraf}\fi}
\newenvironment{proof}{\noindent \textbf{{Proof~}}}{\qed}
\newenvironment{remark}{\noindent \textbf{{Remark~}}}
\mathchardef\ordinarycolon\mathcode`\:
\def\vcentcolon{\mathrel{\mathop\ordinarycolon}}
\newcommand{\nc}{\newcommand}
\nc{\rnc}{\renewcommand}
\nc{\beq}{\begin{equation}}
\nc{\eeq}{{\end{equation}}}
\nc{\beqa}{\begin{eqnarray}}
\nc{\eeqa}{\end{eqnarray}}
\nc{\lbar}[1]{\overline{#1}}
\nc{\bra}[1]{\langle#1|}
\nc{\ket}[1]{|#1\rangle}
\nc{\ketbra}[2]{|#1\rangle\!\langle#2|}
\nc{\braket}[2]{\langle#1|#2\rangle}
\nc{\proj}[1]{| #1\rangle\!\langle #1 |}
\nc{\avg}[1]{\langle#1\rangle}
\nc{\supp}{\operatorname{supp}}
\nc{\rank}{\operatorname{rank}}
\nc{\smfrac}[2]{\mbox{$\frac{#1}{#2}$}}
\nc{\tr}{\operatorname{Tr}}
\nc{\ox}{\otimes}
\nc{\dg}{\dagger}
\nc{\dn}{\downarrow}
\nc{\cA}{{\cal A}}
\nc{\cB}{{\cal B}}
\nc{\cC}{{\cal C}}
\nc{\cD}{{\cal D}}
\nc{\cE}{{\cal E}}
\nc{\cF}{{\cal F}}
\nc{\cG}{{\cal G}}
\nc{\cH}{{\cal H}}
\nc{\cI}{{\cal I}}
\nc{\cJ}{{\cal J}}
\nc{\cK}{{\cal K}}
\nc{\cL}{{\cal L}}
\nc{\cM}{{\cal M}}
\nc{\cN}{{\cal N}}
\nc{\cO}{{\cal O}}
\nc{\cP}{{\cal P}}
\nc{\cQ}{{\cal Q}}
\nc{\cR}{{\cal R}}
\nc{\cS}{{\cal S}}
\nc{\cT}{{\cal T}}
\nc{\cU}{{\cal U}}
\nc{\cW}{{\cal W}}
\nc{\cX}{{\cal X}}
\nc{\cY}{{\cal Y}}
\nc{\cZ}{{\cal Z}}
\nc{\csupp}{{\operatorname{csupp}}}
\nc{\qsupp}{{\operatorname{qsupp}}}
\nc{\var}{{\operatorname{var}}}
\nc{\Var}{{\operatorname{Var}}}
\nc{\rar}{\rightarrow}
\nc{\lrar}{\longrightarrow}
\nc{\polylog}{{\operatorname{polylog}}}
\nc{\wt}{{\operatorname{wt}}}
\nc{\av}[1]{{\left\langle {#1} \right\rangle}}
\newcommand\sA[0]{\mathrm{A}}
\newcommand\sB[0]{\mathrm{B}}
\newcommand\sT[0]{\mathrm{T}}
\nc{\RR}{{{\mathbb R}}}
\nc{\CC}{{{\mathbb C}}}
\nc{\FF}{{{\mathbb F}}}
\nc{\NN}{{{\mathbb N}}}
\nc{\ZZ}{{{\mathbb Z}}}
\nc{\PP}{{{\mathbb P}}}
\nc{\QQ}{{{\mathbb Q}}}
\nc{\UU}{{{\mathbb U}}}
\nc{\EE}{{{\mathbb E}}}
\nc{\id}{{\operatorname{id}}}
\nc{\CHSH}{{\operatorname{CHSH}}}
\nc{\be}{\begin{equation}}
\nc{\ee}{{\end{equation}}}
\nc{\bea}{\begin{eqnarray}}
\nc{\eea}{\end{eqnarray}}
\nc{\Hom}[2]{\mbox{Hom}(\CC^{#1},\CC^{#2})}
\nc{\rU}{\mbox{U}}
\nc{\ob}[1]{#1}
\nc{\tsep}{{\text{sep}}}
\nc{\SEP}{{\text{SEP}}}
\nc{\NS}{{\text{NS}}}
\nc{\LOCC}{{\text{LOCC}}}
\nc{\PPT}{{\text{PPT}}}
\nc{\EXT}{{\text{EXT}}}
\nc{\Sym}{{\operatorname{Sym}}}
\nc{\ERLO}{{E_{\text{r,LO}}}}
\nc{\ERLOCC}{{E_{\text{r,LOCC}}}}
\nc{\ERPPT}{{E_{\text{r,PPT}}}}
\nc{\ERLOCCinfty}{{E^{\infty}_{\text{r,LOCC}}}}
\newcommand{\Span}{{\operatorname{span}}}
\begin{document}

\title{Geometry of entanglement and separability in Hilbert subspaces
of dimension up to three}

\author[1,2]{\fnm{Rotem} \sur{Liss}}

\author[1]{\fnm{Tal} \sur{Mor}}

\author*[3,4,5]{\fnm{Andreas} \sur{Winter}}\email{andreas.winter@uab.cat}

\affil[1]{\orgdiv{Computer Science Department}, \orgname{Technion},
\orgaddress{\city{Haifa}, \postcode{3200003}, \country{Israel}}}

\affil[2]{\orgdiv{ICFO---Institut de Ci\`encies Fot\`oniques},
\orgname{The Barcelona Institute of Science and Technology},
\orgaddress{\street{Av. Carl Friedrich Gauss 3}, \city{Castelldefels},
\postcode{08860}, \state{Barcelona}, \country{Spain}}}

\affil[3]{\orgdiv{ICREA {\&} F\'{\i}sica Te\`{o}rica: Informaci\'{o}
i Fen\`{o}mens Qu\`{a}ntics, Departament de F\'{\i}sica},
\orgname{Universitat Aut\`{o}noma de Barcelona},
\orgaddress{\city{Bellaterra}, \postcode{08193}, \state{Barcelona},
\country{Spain}}}

\affil[4]{\orgdiv{Institute for Advanced Study},
\orgname{Technische Universit\"at M\"unchen},
\orgaddress{\street{Lichtenbergstra{\ss}e 2a}, \city{Garching},
\postcode{85748}, \country{Germany}}}

\affil[5]{\orgname{QUIRCK---Quantum Information
Independent Research Centre Kessenich},
\orgaddress{\street{Gerhard-Samuel-Stra{\ss}e 14}, \city{Bonn},
\postcode{53129}, \country{Germany}}}

\abstract{We present a complete classification of the geometry of
the mutually complementary sets of
entangled and separable
states in three-dimensional Hilbert subspaces of bipartite and multipartite
quantum systems.
Our analysis begins by finding the geometric structure
of the pure product states in a given three-dimensional Hilbert subspace,
which determines all the possible separable and entangled mixed states
over the same subspace.
In bipartite systems, we characterise the $14$ possible
qualitatively different geometric shapes for the set of separable states
in any three-dimensional Hilbert subspace
($5$ classes which also appear in two-dimensional subspaces
and were found and analysed by Boyer, Liss and Mor 
[\emph{Phys. Rev. A} 95:032308, 2017 \cite{BLM}],
and $9$ novel classes which appear only in three-dimensional subspaces),
describe their geometries, and provide figures illustrating them.
We also generalise these results to characterise the sets of fully separable
states
(and hence the complementary sets of somewhat entangled states)
in three-dimensional subspaces of multipartite systems.
Our results show which geometrical forms quantum entanglement
can and cannot take in low-dimensional subspaces.}

\keywords{Quantum entanglement, separable states, convex geometry}

\pacs[Mathematics Subject Classification]{81P16, 81P40, 52A15, 52A20}

\maketitle

\section{Introduction}
\label{sec:intro}
In the 89 years since the discovery of quantum entanglement~\cite{EPR}
and the realisation that it marks the main departure of quantum physics from any
form of classical explanation~\cite{Schroedinger1,Schroedinger2,Schroedinger3},
this quantum effect has been promoted
from a counter-intuitive foundational phenomenon~\cite{Bell,CHSH}
to the fuel of emerging
quantum technologies~\cite{BDSW,Ekert:QKD,JozsaLinden:ent,QTex}.
Indeed, quantum entanglement remains to this day the most studied
and most crucial feature separating non-classical information processing
from its classical counterpart~\cite{HHHH:ent}.

When classifying quantum entanglement, a major problem is deciding whether
a specific quantum state presents quantum entanglement (is ``entangled'') or
does not present quantum entanglement (is ``separable'' or ``non-entangled'').
It is well known that this general problem
is computationally hard (NP-hard)~\cite{Gurvits1,Gurvits2,Gharibian}:
its hardness results from the fact that quantum separability
is a bilinear condition which, if it were easy to solve, would allow the
encoding of quadratic constraints into otherwise convex (semidefinite) problems,
which is known to give rise to computationally hard problems~\cite{BeigiShor,HarrowMontanaro}.
However, the known hardness results only apply to the most general quantum states,
and in fact, it requires the consideration of highly mixed quantum 
states whose density matrices have maximal ranks.
For example, any pure state (i.e., a rank-one quantum state)
is known to be entangled if and only if its partial trace (reduced state)
is not pure, a condition easy to check. 
Formally, a pure state $\rho^{\sA\sB} = \proj{\psi}^{\sA\sB}$ 
is separable if and only if its partial trace 
$\rho^\sA \triangleq \tr_\sB \rho^{\sA\sB}$ is pure,
which happens if and only if $\tr (\rho^\sA)^2 = \tr \rho^\sA = 1$.

For quantum states whose rank is higher but still relatively small
(compared to the dimension of the full Hilbert space $\cH_\sA \ox \cH_\sB$),
analysing the reduced state is not enough, but separability
can still be efficiently detected in some cases.
First of all, if $\rank\rho^{\sA\sB}
< \max\left\{\rank\rho^\sA, \rank\rho^\sB\right\}$
then the state $\rho^{\sA\sB}$ is necessarily entangled
and in fact has distillable entanglement~\cite{HSTT99}.
If equality holds ($\rank\rho^{\sA\sB}
= \max\left\{\rank\rho^\sA, \rank\rho^\sB\right\}$),
or alternatively if $(\rank\rho^\sA)(\rank\rho^\sB) \leq 6$,
then the state $\rho^{\sA\sB}$ is separable if and only if
it has a ``positive partial transpose'' (PPT), namely, if applying
the partial transpose operator $(\cdot)^{\sT^\sB}$
to the density matrix $\rho^{\sA\sB}$
results in a positive semidefinite matrix~\cite{Peres,HHH:PPT,HLVC00}.
These results cover all bipartite density matrices of rank up to three.
If $\rank\rho^{\sA\sB} = 4$,
the result of Chen and Djokovi\'c~\cite{CD11} shows that
$\rho^{\sA\sB}$ is separable if and only if it is PPT
and its support includes a non-zero product state.
Furthermore, for all ranks bounded away from the maximum
(concretely, for all ranks $r \leq (\dim \cH_\sA - 1)(\dim \cH_\sB - 1)$),
the separable density matrices of rank $r$ form a set
of measure zero within the set of \emph{all} density matrices
of rank $r$~\cite{Lockhart}.

However, while deciding whether a \emph{specific} low-rank quantum state
is entangled or separable is an easy task, characterising the set of \emph{all}
separable states within a specific Hilbert subspace can be much harder.
(This is equivalent to analysing the set of separable states inside the
\emph{support} of a specific mixed state $\rho^{\sA\sB}$---namely,
inside the Hilbert subspace spanned by the eigenstates of $\rho^{\sA\sB}$
corresponding to non-zero eigenvalues.)
To achieve this goal, the first step requires classifying
the different Hilbert subspaces
(and the corresponding mixed states supported on them)
according to the geometric picture generated
by the set of all separable states inside each subspace.

The analysis for two-dimensional Hilbert subspaces
(Hilbert subspaces $S \subseteq \cH_\sA \ox \cH_\sB$ such that $\dim S = 2$),
and equivalently for rank-2 mixed states,
was completed by Boyer and two of the present authors~\cite{BLM}:
they found five (one plus one plus three) classes of two-dimensional subspaces
corresponding to five qualitatively different geometries
of their sets of separable states.
Generally speaking, $S$ may include \emph{no product states} at all;
may include exactly \emph{one} product state;
or may include \emph{two} linearly independent product states
in one of three qualitatively different ways
yielding three geometrically different sets of separable states,
detailed in Theorem~\ref{thm:BLM} of this paper.

In the present paper we push this line of investigation further by considering 
the case of three-dimensional Hilbert subspaces
(Hilbert subspaces $S \subseteq \cH_\sA \ox \cH_\sB$ such that $\dim S = 3$),
and equivalently of rank-3 mixed states.
We stress that in the present work,
similarly to the previous work~\cite{BLM}, only
the dimension of the subspace $S$ is fixed (to $3$),
while the local Hilbert spaces
$\cH_\sA$ and $\cH_\sB$ can have any dimension.
We prove that there are 14 geometrically distinct classes 
of such subspaces, which are the previous 5 classes of Theorem~\ref{thm:BLM}
(corresponding to~\cite{BLM})
in case $S$ includes at most two linearly independent product states, and 9
new classes (in case $S$ includes three linearly independent product states)
with different associated geometries of the sets of separable states. 
These classes are most naturally described using
the dimensions of the projections (partial traces) of the separable states onto
$\cH_\sA$ and $\cH_\sB$, whose supports we will denote by
$A_\tsep$ and $B_\tsep$, respectively.

The rest of this paper is organised as follows:
in Section~\ref{sec:setting} we define the mathematical setting.
In Section~\ref{sec:prev} we review the previous results of~\cite{BLM}
on two-dimensional subspaces,
and in Section~\ref{sec:main} we state and prove
our main result as Theorem~\ref{thm:main},
where the proof is divided according to the dimensions of the local projections.
Section~\ref{sec:illustrations} is dedicated to geometric
descriptions of the sets of separable states described in
Theorem~\ref{thm:main}, including visualisations of the possible classes.
In Section~\ref{sec:multipartite} we extend the results to multipartite systems.
Finally, in Section~\ref{sec:discussion} we conclude and discuss
the obtained results.

\section{Setting}
\label{sec:setting}
We begin the analysis given a rank-3 quantum mixed state $\rho^{\sA\sB}$ on
a bipartite system $\cH_\sA \ox \cH_\sB$. The \emph{support} of $\rho^{\sA\sB}$
is the three-dimensional Hilbert subspace $S = \supp \rho^{\sA\sB}$
which is spanned by the eigenstates of $\rho^{\sA\sB}$; that is,
if we write $\rho^{\sA\sB} = p_1 \proj{\psi_1}^{\sA\sB}
+ p_2 \proj{\psi_2}^{\sA\sB} + p_3 \proj{\psi_3}^{\sA\sB}$,
then $S = \supp \rho^{\sA\sB} = \Span\left\{\ket{\psi_1}^{\sA\sB},
\ket{\psi_2}^{\sA\sB}, \ket{\psi_3}^{\sA\sB}\right\}$.
We point out that, in general, $\supp \rho = (\ker \rho)^\perp$ and
$\rank \rho = \dim(\supp \rho)$ (which in our case equals $3$);
moreover, $S = \supp \rho$ includes all pure states
appearing in any of the possible decompositions
of the mixed state $\rho$ (see details and proof in Lemma~3 of~\cite{BLM}),
so the definition of $S$ is independent of the specific decomposition
chosen for $\rho$.

Formally, given a bipartite Hilbert space $\cH_\sA \ox \cH_\sB$
and given any Hilbert subspace $S \subseteq \cH_\sA \ox \cH_\sB$,
we denote by $S_\tsep$ the subspace spanned by all product states in $S$:
\begin{eqnarray}
S_\tsep &\triangleq& \Span \left\{\ket{\psi}^{\sA\sB} \in S ~ : ~
\ket{\psi}^{\sA\sB}\text{ is a product state}\right\} \nonumber \\
&=& \Span \left\{\ket{\psi}^{\sA\sB} \in S ~ : ~
\exists \ket{\phi_\sA}^\sA \in \cH_\sA ~ , ~ \ket{\phi_\sB}^\sB \in \cH_\sB ~ : ~
\ket{\psi}^{\sA\sB} = \ket{\phi_\sA}^\sA \ket{\phi_\sB}^\sB\right\}.
\end{eqnarray}
In our paper, we characterise the set of all separable states
(both pure and mixed) over $S$, a set we denote by $\cD^S_\tsep$.
This set includes exactly all mixtures (convex combinations)
of product states in $S$. Formally, we define:
\begin{equation}
\cD^S_\tsep \triangleq \{\rho \in \cD(S) ~ : ~ \rho\text{ is separable}\},
\end{equation}
where $\cD(S)$ is the set of
all positive semidefinite operators that have trace $1$
on the Hilbert space $S$.

Because $\dim S = 3$, the dimension of its subspace $S_\tsep$ can be
$0$, $1$, $2$, or $3$~\footnote{Note that if $S_\tsep$ is $0$-,
$1$-, or $2$-dimensional, it follows that $\rho^{\sA\sB}$
is necessarily entangled: if $\rho^{\sA\sB}$ is separable,
it must be a mixture of at least three linearly independent
product states from $S_\tsep$ (because $\rank \rho^{\sA\sB} = 3$),
so $\dim S_\tsep = 3$.}.
The analysis of the first three cases ($\dim S_\tsep \in \{0, 1, 2\}$)
was carried out in~\cite{BLM}~\footnote{All three cases
$\dim S_\tsep \in \{0, 1, 2\}$ exist,
because any given space $S_\tsep$ (spanned by product states)
can be extended by a suitable set of entangled states
that are linearly independent of $S_\tsep$ and of each other,
in order to reach the dimension $\dim S = 3$.}.
We therefore focus here on the last case $\dim S_\tsep = 3$, in which $S_\tsep = S$
is spanned by three linearly independent product states:
\begin{equation}\label{eq:S}
S = \Span\left\{\ket{\alpha_1}^\sA \ket{\beta_1}^\sB,
\ket{\alpha_2}^\sA \ket{\beta_2}^\sB,
\ket{\alpha_3}^\sA \ket{\beta_3}^\sB\right\}.
\end{equation}
Now, the classification will depend on the dimensionality
of the two \emph{local} subspaces
\begin{equation}\begin{split}\label{eq:Aprime-Bprime}
A_\tsep &\triangleq \Span\left\{\ket{\alpha_1}^\sA, \ket{\alpha_2}^\sA,
\ket{\alpha_3}^\sA\right\}, \\
B_\tsep &\triangleq \Span\left\{\ket{\beta_1}^\sB, \ket{\beta_2}^\sB,
\ket{\beta_3}^\sB\right\},
\end{split}\end{equation}
each of which can be $1$-, $2$-, or $3$-dimensional.
Writing all possible combinations of their dimensions
$(\dim A_\tsep,\dim B_\tsep)$, we obtain the following possible combinations:
(1,3) and (3,1), (3,3),
(2,3) and (3,2), and finally (2,2).
The resulting classes corresponding to each combination
are presented in Theorem~\ref{thm:main}.

\section{Previous work of~\cite{BLM}}
\label{sec:prev}
We begin by presenting the result of~\cite{BLM}
on a two-dimensional Hilbert subspace $S$,
which geometrically corresponds to a Bloch sphere.
This result is mainly based on distinguishing
the three simple cases (1,2), (2,1), and (2,2).

\begin{theorem}[Boyer/Liss/Mor~\cite{BLM}]
  \label{thm:BLM}
  Given a bipartite Hilbert space $\cH_\sA\ox \cH_\sB$
  and any two-dimensional subspace $S \subseteq \cH_\sA\ox \cH_\sB$,
  one of the 3 following cases holds:
  (note that each case corresponds to a possible dimension of the Hilbert
  subspace $S_\tsep$ spanned by all product states in $S$)
  \begin{enumerate}
    \item \label{thm:BLM0} $S$ includes no product states ($\dim S_\tsep = 0$),
    in which case all pure and mixed states over $S$ are entangled; or
    \item \label{thm:BLM1} $S$ includes exactly one (pure) product state
    ($\dim S_\tsep = 1$),
    in which case all the other pure and mixed states over $S$ are entangled; or
    \item \label{thm:BLM2} $S$ is spanned by two (pure) product states
    ($\dim S_\tsep = 2$), so:
    $S = \Span \left\{ \ket{\alpha_1}^\sA \ket{\beta_1}^\sB,
    \ket{\alpha_2}^\sA \ket{\beta_2}^\sB \right\}$.
  \end{enumerate}
  In the third case, let
  \begin{equation}\begin{split}
    A_\tsep &\triangleq \Span\left\{ \ket{\alpha_1}^\sA,
    \ket{\alpha_2}^\sA \right\}, \\
    B_\tsep &\triangleq \Span\left\{ \ket{\beta_1}^\sB,
    \ket{\beta_2}^\sB \right\},
  \end{split}\end{equation}
  whose dimensions can be
  $(\dim A_\tsep,\dim B_\tsep) \in \{(1,2),\,(2,1),\,(2,2)\}$.
  Then, the separability of the pure and mixed states over $S$ is as follows,
  depending on the dimensions $(\dim A_\tsep,\dim B_\tsep)$:
  \begin{description}
    \itemsep0em
    \item[(1,2)] $S = \left\{\ket{\alpha_1}^\sA\right\} \ox B_\tsep$ is simply
                 a local qubit in subsystem $\cH_\sB$,
                 and all pure and mixed states over $S$ are separable.
                 (In fact, all of them are product states
                 of a fixed $\ket{\alpha_1}^\sA$
                 and a qubit state over $B_\tsep$.)
    
    \item[(2,1)] $S = A_\tsep \ox \left\{\ket{\beta_1}^\sB\right\}$ is simply
                 a local qubit in subsystem $\cH_\sA$,
                 and all pure and mixed states over $S$ are separable.
                 (In fact, all of them are product states
                 of a qubit state over $A_\tsep$
                 and a fixed $\ket{\beta_1}^\sB$.)
    
    \item[(2,2)] The separable pure and mixed states over $S$
                 are exactly all mixtures (convex combinations)
                 of $\proj{\alpha_1}^\sA \ox \proj{\beta_1}^\sB$ and
                 $\proj{\alpha_2}^\sA \ox \proj{\beta_2}^\sB$;
                 all the other pure and mixed states over $S$ are entangled.
  \end{description}
\end{theorem}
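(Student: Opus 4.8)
The plan is to reduce the entire statement to a description of the \emph{pure} product states contained in $S$, and then read off the picture for mixed states by convexity. The structural fact underlying this reduction is that the set $\cD^S_\tsep$ of separable density matrices supported on $S$ equals the convex hull of the pure product states lying inside $S$: on the one hand, any mixture of product states in $S$ is by definition separable and supported on $S$; on the other hand, if $\rho \in \cD^S_\tsep$ then every pure state in any decomposition of $\rho$---in particular every product state of a separable decomposition---must lie in $\supp\rho = S$ (this is Lemma~3 of~\cite{BLM}). Hence, once the pure product states in $S$ are identified, $\cD^S_\tsep$ is pinned down as their convex hull, and the complementary set of entangled states follows.

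With this reduction, cases~\ref{thm:BLM0} and~\ref{thm:BLM1} are immediate. If $\dim S_\tsep = 0$ there are no product states in $S$, so the convex hull is empty and no state over $S$ is separable. If $\dim S_\tsep = 1$, every product state in $S$ is proportional to a single vector $\ket{\psi_0}$, so there is exactly one pure product state $\proj{\psi_0}$ and $\cD^S_\tsep = \{\proj{\psi_0}\}$; every other state over $S$ is therefore entangled.

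For case~\ref{thm:BLM2}, write the two spanning product states as $\ket{\alpha_1}^\sA\ket{\beta_1}^\sB$ and $\ket{\alpha_2}^\sA\ket{\beta_2}^\sB$. First I would rule out $(\dim A_\tsep,\dim B_\tsep)=(1,1)$: this would force both product states to be proportional, contradicting $\dim S = 2$, so only $(1,2)$, $(2,1)$, $(2,2)$ remain. The central tool is the coefficient-matrix criterion: fixing linearly independent local bases, a vector $\ket{\psi} = \sum_{ij} c_{ij}\ket{e_i}^\sA\ket{f_j}^\sB$ is a product state if and only if the matrix $(c_{ij})$ has rank at most one, since $\ket{x}^\sA\ket{y}^\sB$ has coefficient matrix $xy^\sT$ and conversely every rank-one matrix factors this way. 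As a change of local basis multiplies $(c_{ij})$ by invertible matrices and preserves rank, we may freely use the (possibly non-orthonormal) bases $\{\ket{\alpha_i}^\sA\}$ and $\{\ket{\beta_j}^\sB\}$. In the degenerate cases $(1,2)$ and $(2,1)$ one factor collapses: if $\dim A_\tsep = 1$ then $\ket{\alpha_1}^\sA,\ket{\alpha_2}^\sA$ are proportional to a common $\ket{\alpha}^\sA$, so $S = \{\ket{\alpha}^\sA\}\ox B_\tsep$ and every vector in $S$ is automatically the product state $\ket{\alpha}^\sA\ket{\phi_\sB}^\sB$; hence every state over $S$ is separable, indeed of the form $\proj{\alpha}^\sA\ox\sigma^\sB$, and $(2,1)$ is symmetric.

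In the generic case $(2,2)$, using the bases $\{\ket{\alpha_1}^\sA,\ket{\alpha_2}^\sA\}$ and $\{\ket{\beta_1}^\sB,\ket{\beta_2}^\sB\}$, the general vector $a\ket{\alpha_1}^\sA\ket{\beta_1}^\sB + b\ket{\alpha_2}^\sA\ket{\beta_2}^\sB$ has diagonal coefficient matrix $\operatorname{diag}(a,b)$, whose rank is at most one exactly when $ab=0$; thus the only pure product states are the two spanning ones, and by the convex-hull reduction $\cD^S_\tsep$ consists precisely of the mixtures of $\proj{\alpha_1}^\sA\ox\proj{\beta_1}^\sB$ and $\proj{\alpha_2}^\sA\ox\proj{\beta_2}^\sB$. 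I expect the only genuinely non-trivial step to be the convex-hull reduction of the first paragraph---that the product states of a separable decomposition are forced to live in $S$ itself---after which the classification is a short rank computation; but this is exactly what Lemma~3 of~\cite{BLM} provides.
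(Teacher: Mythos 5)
Your proof is correct, but it takes a genuinely different route from the paper: the paper does not reprove this statement at all---its ``proof'' is simply a citation of Theorem~9 of \cite{BLM}, together with a dictionary matching the present cases to the Classes~1--5 of that work (with Classes~2 and~3 merged, since the orthogonal/non-orthogonal distinction disappears under local invertible operations). You instead give a self-contained derivation: first the reduction of $\cD^S_\tsep$ to the convex hull of the pure product states in $S$ (via the support lemma, Lemma~3 of \cite{BLM}, which this paper itself quotes in Section~\ref{sec:setting}), then the coefficient-matrix rank criterion in the possibly non-orthonormal product bases. Notably, this is the same machinery the paper deploys for the three-dimensional case in Theorem~\ref{thm:main}---local invertible maps to a convenient basis, a rank/determinant test for product vectors, and convexity---so your argument has the merit of unifying the two-dimensional classification with the paper's treatment of the three-dimensional one, and of making the statement verifiable without consulting \cite{BLM}, at the cost of duplicating work the authors chose to import wholesale. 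One small imprecision: for $\rho \in \cD^S_\tsep$ you write that the product states of a separable decomposition lie in $\supp\rho = S$; when $\rank\rho < \dim S$ one only has $\supp\rho \subseteq S$, which is all your argument actually needs.
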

\begin{proof}
Proved in~\cite{BLM} as its Theorem~9, which was phrased in different terms:
our case~\ref{thm:BLM0} corresponds to Class~5 of~\cite{BLM},
our case~\ref{thm:BLM1} corresponds to Class~4 of~\cite{BLM},
and in our case~\ref{thm:BLM2}, options~(1,2) and~(2,1)
correspond to Class~1 of~\cite{BLM},
and option~(2,2) corresponds to Class~2+3 of~\cite{BLM}.
\end{proof}

\medskip
\begin{remark}
We point out that~\cite{BLM} combined
both options $(\dim A_\tsep,\dim B_\tsep) \in \{(1,2),\,(2,1)\}$
into one class (Class~1), because they give the same final result
(all pure and mixed over $S$ being product states)
and are related to one another by the exchange symmetry between
$\cH_\sA$ and $\cH_\sB$. Here we distinguish between these two options,
so that classification becomes easier in the more involved case
of a three-dimensional Hilbert subspace (Theorem~\ref{thm:main}).

On the other hand, \cite{BLM} split option $(\dim A_\tsep,\dim B_\tsep) = (2,2)$
into two classes: Class~2 where the two states are orthogonal
($\ket{\alpha_1}^\sA \ket{\beta_1}^\sB \perp
\ket{\alpha_2}^\sA \ket{\beta_2}^\sB$)
and Class~3 where they are non-orthogonal.
Here we drop this distinction because it disappears
under the symmetry of local invertible operations.
As we explain below in Section~\ref{sec:main},
for invertible $X$ and $Y$ acting on $\cH_\sA$ and $\cH_\sB$, respectively,
$\rho^{\sA\sB}$ is separable
if and only if $(X \ox Y) \rho^{\sA\sB} (X \ox Y)^\dagger$ is separable.
We can thus use $X$ (or $Y$) to map any linearly independent set of states
in $\cH_\sA$ (or $\cH_\sB$, respectively)
to any other linearly independent set of states,
including in particular an orthonormal basis.
Classes~2 and~3 of~\cite{BLM} are thus the same under this symmetry,
so we cannot distinguish between them here.
\end{remark}

\section{Main result}
\label{sec:main}
We can now state our main result, modeled after Theorem~\ref{thm:BLM}.
The theorem uses the notations defined in Section~\ref{sec:setting}.

We note that $n$ vectors in a vector space of dimension $d$
are said to be ``in general position'' if any subset of size $d$ of them
is linearly independent.
In addition, $\cD(S)$ is the set of all density matrices
over the Hilbert space $S$,
and $\operatorname{conv}(A)$ is the set of all convex combinations (mixtures)
of the states in the set $A$.
Moreover, in the proof, quantum states are said to be ``equal''
even if they differ by a physically-irrelevant global phase $e^{i \varphi}$,
and they are always assumed to be normalised.

In the proof we also use the important observation that
separability of a state is invariant under local invertible operations.
In other words, for any two invertible matrices $X \in \text{GL}(\cH_\sA)$ and
$Y\in\text{GL}(\cH_\sB)$, the state $\rho^{\sA\sB}$ is separable if and only if
the state $(X \ox Y) \rho^{\sA\sB} (X \ox Y)^\dagger$ is separable.
Under this symmetry, the support is mapped accordingly:
if $S = \supp \rho^{\sA\sB}$,
then $\supp [(X \ox Y) \rho^{\sA\sB} (X \ox Y)^\dagger] = (X \ox Y)S$.
We will thus study the different shapes the set of separable states
$\cD^S_\tsep$ can take, where any two subspaces related
by a local invertible map $X \ox Y$ can be treated as equivalent.

\begin{theorem}
  \label{thm:main}
  Given a bipartite Hilbert space $\cH_\sA\ox \cH_\sB$
  and any three-dimensional subspace $S \subseteq \cH_\sA\ox \cH_\sB$,
  let $S_\tsep$ be the subspace spanned by all product states in $S$
  (that is, $S_\tsep \triangleq \Span \{\ket{\psi}^{\sA\sB} \in S ~ :
  \ket{\psi}^{\sA\sB}\text{ is a product state}\}$).
  We would like to characterise the set of all separable states
  (both pure and mixed) over $S$, a set we denote by $\cD^S_\tsep$
  and formally define as follows:
  \begin{equation}
  \cD^S_\tsep \triangleq \{\rho \in \cD(S) ~ : ~ \rho\text{ is separable}\}.
  \end{equation}
  
  If $\dim S_\tsep \leq 2$, then $\cD^S_\tsep$
  belongs to one of the 5 classes described in Theorem~\ref{thm:BLM}.
  
  If $\dim S_\tsep = 3$ (so $S_\tsep=S$),
  then using the notations of Eqs.~\eqref{eq:S}--\eqref{eq:Aprime-Bprime}:
  \begin{equation}\begin{split}
  S &= \Span\left\{\ket{\alpha_1}^\sA \ket{\beta_1}^\sB,
  \ket{\alpha_2}^\sA \ket{\beta_2}^\sB,
  \ket{\alpha_3}^\sA \ket{\beta_3}^\sB\right\}, \\
  A_\tsep &\triangleq \Span\left\{\ket{\alpha_1}^\sA, \ket{\alpha_2}^\sA,
  \ket{\alpha_3}^\sA\right\}, \\
  B_\tsep &\triangleq \Span\left\{\ket{\beta_1}^\sB, \ket{\beta_2}^\sB,
  \ket{\beta_3}^\sB\right\},
  \end{split}\end{equation}
  one of the 9 following cases occurs,
  depending on the dimensions $(\dim A_\tsep,\dim B_\tsep)$:

  \begin{description}
    \itemsep0em
    \item[(1,3)] All pure and mixed states over $S$ are separable, because
       $S = \left\{\ket{\alpha_1}^\sA\right\} \ox B_\tsep$.
       (In fact, all states over $S$ are product states
       of a fixed state $\ket{\alpha_1}^\sA$
       and a qutrit state over $B_\tsep$.)
       Formally, $\cD^S_\tsep = \cD(S)$.
       This case is illustrated in Fig.~\ref{fig:qutrit}.
  
    \item[(3,1)] This case is symmetric to (1,3):
       all pure and mixed states over $S$ are separable, because
       $S = A_\tsep \ox \left\{\ket{\beta_1}^\sB\right\}$.
       (In fact, all states over $S$ are product states
       of a qutrit state over $A_\tsep$
       and a fixed state $\ket{\beta_1}^\sB$.)
       Formally, $\cD^S_\tsep = \cD(S)$.
       This case is illustrated in Fig.~\ref{fig:qutrit}.
  
    \item[(3,3)] The separable pure and mixed states over $S$
       are exactly all mixtures (convex combinations)
       of $\proj{\alpha_1}^\sA \ox \proj{\beta_1}^\sB$,
       $\proj{\alpha_2}^\sA \ox \proj{\beta_2}^\sB$,
       and $\proj{\alpha_3}^\sA \ox \proj{\beta_3}^\sB$;
       all the other pure and mixed states over $S$ are entangled.
       Formally:
       \begin{equation}
       \cD^S_\tsep = \operatorname{conv}
       \left\{\proj{\alpha_1}^\sA \ox \proj{\beta_1}^\sB ~ , ~
       \proj{\alpha_2}^\sA \ox \proj{\beta_2}^\sB ~ , ~
       \proj{\alpha_3}^\sA \ox \proj{\beta_3}^\sB\right\}.
       \end{equation}
       This case is illustrated in Fig.~\ref{fig:ctrit}.

    \item[(2,3)] We distinguish two subcases:
       \begin{enumerate}
        \item If the three states
        $\ket{\alpha_1}^\sA, \ket{\alpha_2}^\sA, \ket{\alpha_3}^\sA$
        are linearly dependent but in general position
        (i.e., any two of them are linearly independent),
        then the separable pure and mixed states over $S$
        are exactly all mixtures (convex combinations) of
        $\proj{\alpha_1}^\sA \ox \proj{\beta_1}^\sB$, 
        $\proj{\alpha_2}^\sA \ox \proj{\beta_2}^\sB$, and
        $\proj{\alpha_3}^\sA \ox \proj{\beta_3}^\sB$,
        identically to case (3,3). Formally:
        \begin{equation}
        \cD^S_\tsep = \operatorname{conv}
        \left\{\proj{\alpha_1}^\sA \ox \proj{\beta_1}^\sB ~ , ~
        \proj{\alpha_2}^\sA \ox \proj{\beta_2}^\sB ~ , ~
        \proj{\alpha_3}^\sA \ox \proj{\beta_3}^\sB\right\}.
        \end{equation}
        This case is illustrated in Fig.~\ref{fig:ctrit}.
        \item Otherwise, without loss of generality
        we can assume $\ket{\alpha_2}^\sA = \ket{\alpha_3}^\sA$,
        and the separable pure and mixed states over $S$
        are exactly all mixtures (convex combinations) of
        $\proj{\alpha_1}^\sA \ox \proj{\beta_1}^\sB$
        with any pure or mixed state over the space
        $\left\{\ket{\alpha_2}^\sA\right\}
        \ox \Span\left\{\ket{\beta_2}^\sB, \ket{\beta_3}^\sB\right\}$.
        Formally, in case $\ket{\alpha_2}^\sA = \ket{\alpha_3}^\sA$:
        \begin{equation}
        \cD^S_\tsep = \operatorname{conv}
        \left[\left\{\proj{\alpha_1}^\sA \ox \proj{\beta_1}^\sB\right\} \cup
        \cD\left(\left\{\ket{\alpha_2}^\sA\right\} \ox
        \Span\left\{\ket{\beta_2}^\sB, \ket{\beta_3}^\sB\right\}\right)\right],
        \end{equation}
        and symmetric results are obtained
        in case $\ket{\alpha_1}^\sA = \ket{\alpha_2}^\sA$
        or $\ket{\alpha_1}^\sA = \ket{\alpha_3}^\sA$.
        This case is illustrated in Fig.~\ref{fig:qubit-cone}.
       \end{enumerate}

    \item[(3,2)] This case is symmetric to (2,3):
       we distinguish two subcases:
       \begin{enumerate}
        \item If the three states
        $\ket{\beta_1}^\sB, \ket{\beta_2}^\sB, \ket{\beta_3}^\sB$
        are linearly dependent but in general position
        (i.e., any two of them are linearly independent),
        then the separable pure and mixed states over $S$
        are exactly all mixtures (convex combinations) of
        $\proj{\alpha_1}^\sA \ox \proj{\beta_1}^\sB$, 
        $\proj{\alpha_2}^\sA \ox \proj{\beta_2}^\sB$, and
        $\proj{\alpha_3}^\sA \ox \proj{\beta_3}^\sB$,
        identically to case (3,3). Formally:
        \begin{equation}
        \cD^S_\tsep = \operatorname{conv}
        \left\{\proj{\alpha_1}^\sA \ox \proj{\beta_1}^\sB ~ , ~
        \proj{\alpha_2}^\sA \ox \proj{\beta_2}^\sB ~ , ~
        \proj{\alpha_3}^\sA \ox \proj{\beta_3}^\sB\right\}.
        \end{equation}
        This case is illustrated in Fig.~\ref{fig:ctrit}.
        \item Otherwise, without loss of generality
        we can assume $\ket{\beta_2}^\sB = \ket{\beta_3}^\sB$,
        and the separable pure and mixed states over $S$
        are exactly all mixtures (convex combinations) of
        $\proj{\alpha_1}^\sA \ox \proj{\beta_1}^\sB$
        with any pure or mixed state over the space
        $\Span\left\{\ket{\alpha_2}^\sA, \ket{\alpha_3}^\sA\right\}
        \ox \left\{\ket{\beta_2}^\sB\right\}$.
        Formally, in case $\ket{\beta_2}^\sB = \ket{\beta_3}^\sB$:
        \begin{equation}
        \cD^S_\tsep = \operatorname{conv}
        \left[\left\{\proj{\alpha_1}^\sA \ox \proj{\beta_1}^\sB\right\} \cup
        \cD\left(\Span\left\{\ket{\alpha_2}^\sA, \ket{\alpha_3}^\sA\right\} \ox
        \left\{\ket{\beta_2}^\sB\right\}\right)\right],
        \end{equation}
        and symmetric results are obtained
        in case $\ket{\beta_1}^\sB = \ket{\beta_2}^\sB$
        or $\ket{\beta_1}^\sB = \ket{\beta_3}^\sB$.
        This case is illustrated in Fig.~\ref{fig:qubit-cone}.
       \end{enumerate}

    \item[(2,2)] We distinguish two subcases:
       \begin{enumerate}
        \item If the three states 
        $\ket{\alpha_1}^\sA, \ket{\alpha_2}^\sA, \ket{\alpha_3}^\sA$
        are not in general position,
        or symmetrically if the three states
        $\ket{\beta_1}^\sB, \ket{\beta_2}^\sB, \ket{\beta_3}^\sB$
        are not in general position,
        then without loss of generality we can assume
        $\ket{\alpha_2}^\sA = \ket{\alpha_3}^\sA$
        and $\ket{\beta_1}^\sB \ne \ket{\beta_2}^\sB$,
        and then the separable pure and mixed states over $S$
        are exactly all mixtures (convex combinations)
        of any pure or mixed state over the subspace $S_1 \triangleq
        \left\{\ket{\alpha_2}^\sA\right\} \ox
        \Span\left\{\ket{\beta_1}^\sB, \ket{\beta_2}^\sB\right\}$
        with any pure or mixed state over the subspace $S_2 \triangleq
        \Span\left\{\ket{\alpha_1}^\sA, \ket{\alpha_2}^\sA\right\} \ox
        \left\{\ket{\beta_1}^\sB\right\}$.
        Formally, in case $\ket{\alpha_2}^\sA = \ket{\alpha_3}^\sA$
        (and $\ket{\beta_1}^\sB \ne \ket{\beta_2}^\sB$):
        \begin{eqnarray}
        \cD^S_\tsep &=& \operatorname{conv}\left[\cD(S_1) \cup \cD(S_2)\right]
        \nonumber \\
        &=& \operatorname{conv}\left[\cD\left(\left\{\ket{\alpha_2}^\sA\right\}
        \ox \Span\left\{\ket{\beta_1}^\sB, \ket{\beta_2}^\sB\right\}\right)
        \cup \cD\left(\Span\left\{\ket{\alpha_1}^\sA, \ket{\alpha_2}^\sA\right\}
        \ox \left\{\ket{\beta_1}^\sB\right\}\right)\right] \nonumber \\
        &=& \operatorname{conv}\left[\cD\left(\left\{\ket{\alpha_2}^\sA\right\}
        \ox B_\tsep\right)
        \cup \cD\left(A_\tsep \ox \left\{\ket{\beta_1}^\sB\right\}\right)\right],
        \end{eqnarray}
        and symmetric results are obtained
        in case $\ket{\alpha_1}^\sA = \ket{\alpha_2}^\sA$,
        $\ket{\alpha_1}^\sA = \ket{\alpha_3}^\sA$,
        $\ket{\beta_2}^\sB = \ket{\beta_3}^\sB$,
        $\ket{\beta_1}^\sB = \ket{\beta_2}^\sB$,
        or $\ket{\beta_1}^\sB = \ket{\beta_3}^\sB$.
        This case is illustrated in Fig.~\ref{fig:qubit+qubit}.
        \item If the three states
        $\ket{\alpha_1}^\sA, \ket{\alpha_2}^\sA, \ket{\alpha_3}^\sA$
        and the three states
        $\ket{\beta_1}^\sB, \ket{\beta_2}^\sB, \ket{\beta_3}^\sB$
        are both linearly dependent but in general position
        (i.e., any two states of $\ket{\alpha_1}^\sA, \ket{\alpha_2}^\sA,
        \ket{\alpha_3}^\sA$ are linearly independent, and similarly,
        any two states of $\ket{\beta_1}^\sB, \ket{\beta_2}^\sB,
        \ket{\beta_3}^\sB$ are linearly independent),
        then there exists an invertible linear map $L:A_\tsep \mapsto B_\tsep$
        satisfying $\ket{\beta_i}^\sB \propto L \ket{\alpha_i}^\sA$
        for all values $i=1,2,3$, such that the product states
        in $S$ are exactly $\left\{ \ket{\psi}^\sA
        \left(L\ket{\psi}\right)^\sB ~ : ~ \ket{\psi}^\sA \in A_\tsep \right\}$
        (up to a normalisation factor),
        and the separable pure and mixed states over $S$ are all mixtures
        of these product states in $S$. Formally:
        \begin{equation}
        \cD^S_\tsep = \operatorname{conv}
        \left\{\proj{\Psi}^{\sA\sB}~:~\ket{\Psi}^{\sA\sB}
        = \gamma \ket{\psi}^\sA \left(L \ket{\psi}\right)^\sB~,~
        \ket{\psi}^\sA \in A_\tsep~,~\gamma \in \mathbb{C}~,~
        \left\|\ket{\Psi}^{\sA\sB}\right\| = 1 \right\}.
        \end{equation}
        This case is illustrated in Fig.~\ref{fig:phi-x-phi}.
       \end{enumerate}
  \end{description}
\end{theorem}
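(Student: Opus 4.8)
The plan is to reduce the entire theorem to the single task of \emph{identifying all pure product states contained in $S$}, and then taking their convex hull. The key reduction, which I would establish first, is that for any subspace $S$,
\[
\cD^S_\tsep = \operatorname{conv}\left\{\proj{\phi}^{\sA\sB} ~:~ \ket{\phi}^{\sA\sB}\in S \text{ is a pure product state}\right\}.
\]
The inclusion $\supseteq$ is immediate, since each pure product state in $S$ lies in $\cD^S_\tsep$ and this set is convex. For $\subseteq$, take any separable $\rho\in\cD(S)$ and write a separable decomposition $\rho=\sum_i p_i\proj{\phi_i}^{\sA\sB}$ with each $\ket{\phi_i}^{\sA\sB}$ a product state and $p_i>0$; because $\rho$ is a sum of positive operators there is no cancellation, so $\supp\rho=\Span\{\ket{\phi_i}^{\sA\sB}\}\subseteq S$, forcing every $\ket{\phi_i}^{\sA\sB}$ to be a pure product state in $S$. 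Thus $\rho$ lies in the stated convex hull.

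Given this reduction, I would exploit the local-invertibility symmetry stated before the theorem to bring each case into a canonical form, and then use the standard fact that a vector $\sum_{j,k}M_{jk}\ket{j}^\sA\ket{k}^\sB$ is a product state precisely when its coefficient matrix $M$ has rank $1$ (Schmidt rank $1$). Finding the product states in $S$ thus amounts to solving the vanishing of the $2\times 2$ minors of a parametrised coefficient matrix. The cases $(1,3)$ and $(3,1)$ are trivial: one local span collapses to a line, so $S=\{\ket{\alpha_1}^\sA\}\ox B_\tsep$ (or its mirror), every vector is a product, and $\cD^S_\tsep=\cD(S)$. In the cases $(3,3)$, $(2,3)(\mathrm{i})$ and $(3,2)(\mathrm{i})$ I would normalise the independent local vectors to standard basis vectors and the dependent third vector (when present) to $\ket{1}+\ket{2}$; the rank-$1$ condition then forces all but one coefficient to vanish, so the only product states are the three original ones and $\cD^S_\tsep$ is their simplex.

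For the remaining cases the product-state locus becomes a continuum. In $(2,3)(\mathrm{ii})$ and $(3,2)(\mathrm{ii})$, after setting (say) $\ket{\alpha_2}^\sA=\ket{\alpha_3}^\sA$ and normalising to $S=\Span\{\ket{11},\ket{22},\ket{23}\}$, one minor vanishes identically and the rank-$1$ condition splits into $c_1=0$ (yielding the whole qubit $\{\ket{\alpha_2}^\sA\}\ox\Span\{\ket{\beta_2}^\sB,\ket{\beta_3}^\sB\}$) or $c_2=c_3=0$ (yielding the isolated state $\ket{\alpha_1}^\sA\ket{\beta_1}^\sB$), producing the claimed cone. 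In $(2,2)(\mathrm{i})$ I would first justify the relabelling: not being in general position within a plane means two local vectors coincide, so WLOG $\ket{\alpha_2}^\sA=\ket{\alpha_3}^\sA$, which forces $\ket{\beta_2}^\sB\neq\ket{\beta_3}^\sB$ (else two spanning products coincide), and swapping indices $2,3$ if necessary secures $\ket{\beta_1}^\sB\neq\ket{\beta_2}^\sB$; the determinant condition (of the form $a(pc-qb)=0$) then splits into the two qubit families $\{\ket{\alpha_2}^\sA\}\ox B_\tsep$ and $A_\tsep\ox\{\ket{\beta_1}^\sB\}$.

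The main obstacle is the genuinely two-dimensional case $(2,2)(\mathrm{ii})$, where both triples are dependent yet in general position. Here I would use the two independent local freedoms to normalise $\ket{\alpha_1}^\sA,\ket{\alpha_2}^\sA,\ket{\alpha_3}^\sA$ to $\ket1,\ket2,\ket1+\ket2$ and likewise the $\beta$'s, which makes $L$ the identity in these coordinates and brings $S$ to the symmetric subspace $\Span\{\ket{11},\ket{22},\ket{12}+\ket{21}\}$. The coefficient matrix is then \emph{symmetric}, $M=\left(\begin{smallmatrix}c_1&c_3\\ c_3&c_2\end{smallmatrix}\right)$, and the rank-$1$ locus $c_1c_2=c_3^2$ is a conic whose points are exactly the twisted products $\ket{\psi}^\sA(L\ket{\psi})^\sB$ for $\ket{\psi}^\sA\in A_\tsep$. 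Taking the convex hull of this one-parameter family of pure states is what produces the curved body of Fig.~\ref{fig:phi-x-phi}; the work lies in verifying that the normalisation is achievable by local invertible maps and that the symmetric rank-$1$ matrices are precisely the claimed twisted products, after which translating back through $L$ and through the normalising maps completes the proof.
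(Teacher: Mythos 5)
Your proposal is correct, and its skeleton is the paper's: reduce the problem to locating the pure product states in $S$, normalise by local invertible maps $X \ox Y$, split into the same nine cases according to $(\dim A_\tsep,\dim B_\tsep)$ and general position, and take convex hulls at the end. The differences are worth recording. You prove the reduction $\cD^S_\tsep=\operatorname{conv}\left\{\proj{\phi}^{\sA\sB} : \ket{\phi}^{\sA\sB}\in S \text{ a product state}\right\}$ explicitly via the no-cancellation support argument; the paper imports exactly this statement from Lemma~3 of~\cite{BLM} (cited in Section~\ref{sec:setting}) and invokes it silently in every case. Note that this reduction also settles the $\dim S_\tsep\le 2$ clause of the theorem, which your write-up skips: every product state of $S$ then lies in $S_\tsep$, so $\cD^S_\tsep=\cD^{S'}_\tsep$ for any two-dimensional $S'$ with $S_\tsep\subseteq S'\subseteq S$, and Theorem~\ref{thm:BLM} applies to $S'$; you should add that sentence. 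You also run one uniform criterion---vanishing of the $2\times 2$ minors of the coefficient matrix---where the paper argues via the Schmidt form in case (3,3), pairwise linear dependence in (2,3)/(3,2), and a determinant in (2,2); these are equivalent, and your uniform version is arguably tidier. The one genuine divergence is case (2,2)-ii: the paper keeps the general coefficients $a,b,c,d$, substitutes $x=u+wac$, $y=v+wbd$, $z=wbc$, and factorises the conic $xy=z^2\,ad/(bc)$ by hand, which is what produces the map $L$ with its $ad/bc$ factor; you instead spend the full projective freedom to send both triples to $\ket{1},\ket{2},\ket{1}+\ket{2}$, landing $S$ exactly on the symmetric (triplet) subspace, where the product vectors are the symmetric rank-one coefficient matrices, i.e.\ precisely the states $\ket{\psi}^\sA\ket{\psi}^\sB$, and $L$ is the identity; pulling back through the normalising maps recovers the paper's $L$. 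That normalisation is legitimate (three points of a projective line in general position can be mapped to any other three such points by an invertible linear map), and it is exactly the equivalence the paper only records after the proof, in Section~\ref{sec:illustrations}, where the general (2,2)-ii case is reduced to the $ad=bc$ symmetric-subspace case. The trade-off: the paper's computation hands you the explicit formula for $L$ in the original coordinates, which Section~\ref{sec:illustrations} then reuses, while your route isolates the conceptual picture and lets the rank-one-minor criterion do all the case work, at the cost of unwinding the normalising maps when stating the final answer.
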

\begin{proof}
If $\dim S_\tsep \leq 2$, then applying Theorem~\ref{thm:BLM}
to a two-dimensional subspace of $S$ which includes $S_\tsep$ proves that
the set $\cD^{S_\tsep}_\tsep$ of pure and mixed separable states over $S_\tsep$
belongs to one of the five classes described in Theorem~\ref{thm:BLM}.
All the other pure and mixed states over $S$ (that are not states over $S_\tsep$)
must be entangled. Therefore, $\cD^S_\tsep = \cD^{S_\tsep}_\tsep$ indeed
belongs to one of the five classes described in Theorem~\ref{thm:BLM}.

If $\dim S_\tsep = 3$, then $S_\tsep = S$,
in which case we divide our proof into cases
according to the dimensions $(\dim A_\tsep, \dim B_\tsep)$:

\subsection{\label{subsec:proof_13_31}(1,3) and (3,1)}
The case (1,3) means that $\dim A_\tsep = 1$, so $\ket{\alpha_1}^\sA
= \ket{\alpha_2}^\sA = \ket{\alpha_3}^\sA$. Therefore,
\begin{equation}\begin{split}
S &= \Span\left\{\ket{\alpha_1}^\sA \ket{\beta_1}^\sB,
\ket{\alpha_2}^\sA \ket{\beta_2}^\sB,
\ket{\alpha_3}^\sA \ket{\beta_3}^\sB\right\} \\
&= \left\{\ket{\alpha_1}^\sA\right\}
\ox \Span\left\{\ket{\beta_1}^\sB, \ket{\beta_2}^\sB, \ket{\beta_3}^\sB\right\}
= \left\{\ket{\alpha_1}^\sA\right\} \ox B_\tsep,
\end{split}\end{equation}
which means that all pure and mixed states over $S$ are product states:
$\cD^S_\tsep = \cD(S)$, as needed.

The proof for the case (3,1) is symmetric,
so $S = A_\tsep \ox \left\{\ket{\beta_1}^\sB\right\}$ and $\cD^S_\tsep = \cD(S)$,
as needed.

\subsection{\label{subsec:proof_33}(3,3)}
In this case, all three states $\ket{\alpha_1}^\sA, \ket{\alpha_2}^\sA,
\ket{\alpha_3}^\sA$ are linearly independent, and so are $\ket{\beta_1}^\sB,
\ket{\beta_2}^\sB, \ket{\beta_3}^\sB$.
Let $\left\{\ket{1}^\sA, \ket{2}^\sA, \ket{3}^\sA\right\}$
and $\left\{\ket{1}^\sB, \ket{2}^\sB, \ket{3}^\sB\right\}$
be orthonormal bases of $A_\tsep$ and $B_\tsep$, respectively.
We can thus apply a local invertible map
$X \ox Y$, where $X$ is an invertible matrix over $A_\tsep$ mapping
$\ket{\alpha_1}^\sA \mapsto \ket{1}^\sA$,
$\ket{\alpha_2}^\sA \mapsto \ket{2}^\sA$, and
$\ket{\alpha_3}^\sA \mapsto \ket{3}^\sA$,
while $Y$ is an invertible matrix over $B_\tsep$ mapping
$\ket{\beta_1}^\sB \mapsto \ket{1}^\sB$,
$\ket{\beta_2}^\sB \mapsto \ket{2}^\sB$, and
$\ket{\beta_3}^\sB \mapsto \ket{3}^\sB$
(both maps exist, because there always exists an invertible matrix
mapping a given set of linearly independent states
into another given set of linearly independent states).

Since $S = \Span \left\{\ket{\alpha_1}^\sA \ket{\beta_1}^\sB,
\ket{\alpha_2}^\sA \ket{\beta_2}^\sB, \ket{\alpha_3}^\sA
\ket{\beta_3}^\sB\right\}$,
all elements of $S$ are superpositions
$u\ket{\alpha_1}^\sA \ket{\beta_1}^\sB + v\ket{\alpha_2}^\sA \ket{\beta_2}^\sB
+ w\ket{\alpha_3}^\sA \ket{\beta_3}^\sB$.
Under the above map, they are mapped to the respective superpositions
$u\ket{1}^\sA \ket{1}^\sB + v\ket{2}^\sA \ket{2}^\sB
+ w\ket{3}^\sA \ket{3}^\sB$.

Since any local invertible map preserves separability,
we notice that each state in $S$ is a product state if and only if
the state to which it is mapped is a product state.
However, since $u\ket{1}^\sA \ket{1}^\sB + v\ket{2}^\sA \ket{2}^\sB
+ w\ket{3}^\sA \ket{3}^\sB$ is in its Schmidt form
and has a Schmidt number of $3$ or less,
it is a product state if and only if two of its coefficients are zero
(otherwise, it is entangled).
Therefore, the only product states in $S$
are the spanning states $\ket{\alpha_1}^\sA \ket{\beta_1}^\sB,
\ket{\alpha_2}^\sA \ket{\beta_2}^\sB, \ket{\alpha_3}^\sA \ket{\beta_3}^\sB$.

This finding implies that the set of separable pure and mixed states over $S$
includes exactly all mixtures of these three product states, so:
\begin{equation}
\cD^S_\tsep = \operatorname{conv}
\left\{\proj{\alpha_1}^\sA \ox \proj{\beta_1}^\sB ~ , ~
\proj{\alpha_2}^\sA \ox \proj{\beta_2}^\sB ~ , ~
\proj{\alpha_3}^\sA \ox \proj{\beta_3}^\sB\right\},
\end{equation}
as needed.

\subsection{\label{subsec:proof_23_32}(2,3) and (3,2)}
In the case (2,3) we can assume,
without loss of generality, that $\ket{\alpha_1}^\sA \ne \ket{\alpha_2}^\sA$,
so $\ket{\alpha_3}^\sA = a\ket{\alpha_1}^\sA + b\ket{\alpha_2}^\sA$.
On the other hand, $\ket{\beta_1}^\sB, \ket{\beta_2}^\sB, \ket{\beta_3}^\sB$
are linearly independent.
Using again local invertible linear maps, we can map
$\ket{\alpha_1}^\sA \mapsto \ket{1}^\sA$ ~ , ~
$\ket{\alpha_2}^\sA \mapsto \ket{2}^\sA$ ~ , ~
$\ket{\beta_1}^\sB \mapsto \ket{1}^\sB$ ~ , ~
$\ket{\beta_2}^\sB \mapsto \ket{2}^\sB$, and
$\ket{\beta_3}^\sB \mapsto \ket{3}^\sB$,
which also maps $\ket{\alpha_3}^\sA \mapsto a\ket{1}^\sA + b\ket{2}^\sA$.
A general state in $S$ is then a superposition of the following form:
\begin{equation}\begin{split}
  \label{eq:2-3}
  u\ket{\alpha_1}^\sA \ket{\beta_1}^\sB + v\ket{\alpha_2}^\sA \ket{\beta_2}^\sB
            + w\ket{\alpha_3}^\sA\ket{\beta_3}^\sB
            &\mapsto u\ket{1}^\sA \ket{1}^\sB + v\ket{2}^\sA \ket{2}^\sB
            + wa\ket{1}^\sA \ket{3}^\sB + wb\ket{2}^\sA \ket{3}^\sB \\
            &=  \ket{1}^\sA \left(u\ket{1}^\sB + wa\ket{3}^\sB\right)
            + \ket{2}^\sA \left(v\ket{2}^\sB + wb\ket{3}^\sB\right).
\end{split}\end{equation}
This is a product state if and only if
$u\ket{1}^\sB + wa\ket{3}^\sB$ and $v\ket{2}^\sB + wb\ket{3}^\sB$ are
linearly dependent (including the possibility that one of them is $0$). 
There are thus two subcases:
\begin{enumerate}
\item If the three states $\ket{\alpha_1}^\sA, \ket{\alpha_2}^\sA,
     \ket{\alpha_3}^\sA$ are in general position (that is, any two of them
     are linearly independent), then in particular $a,b \neq 0$.
     This means that for $u\ket{1}^\sB + wa\ket{3}^\sB$ and
     $v\ket{2}^\sB + wb\ket{3}^\sB$ to be linearly dependent, there are only
     three possibilities: $u=v=0$, $u=w=0$, and $v=w=0$. Therefore,
     the only product states in $S$ are the spanning states
     $\ket{\alpha_1}^\sA \ket{\beta_1}^\sB,
     \ket{\alpha_2}^\sA \ket{\beta_2}^\sB,
     \ket{\alpha_3}^\sA \ket{\beta_3}^\sB$.
     This implies, identically to case (3,3), that the only separable states
     over $S$ are the mixtures of these three product states, so:
     \begin{equation}
     \cD^S_\tsep = \operatorname{conv}
     \left\{\proj{\alpha_1}^\sA \ox \proj{\beta_1}^\sB ~ , ~
     \proj{\alpha_2}^\sA \ox \proj{\beta_2}^\sB ~ , ~
     \proj{\alpha_3}^\sA \ox \proj{\beta_3}^\sB\right\},
     \end{equation}
     as needed.
\item Otherwise, without loss of generality we can assume
     $\ket{\alpha_2}^\sA=\ket{\alpha_3}^\sA$, so $a=0$ ~ , ~ $b=1$.
     In this case, the resulting states $u\ket{1}^\sB$ and
     $v\ket{2}^\sB + w\ket{3}^\sB$ are linearly dependent
     if and only if $u=0$ or $v=w=0$.
     Therefore, the product states in $S$ are either of the form
     $\ket{\alpha_2}^\sA \left(v\ket{\beta_2}^\sB + w\ket{\beta_3}^\sB\right)$
     (corresponding to case $u=0$)
     which are exactly all states in the Hilbert subspace
     $\left\{\ket{\alpha_2}^\sA\right\} \ox
     \Span\left\{ \ket{\beta_2}^\sB, \ket{\beta_3}^\sB \right\}$,
     or the single state $\ket{\alpha_1}^\sA \ket{\beta_1}^\sB$
     (corresponding to case $v=w=0$).
     This implies that the separable pure and mixed states over $S$
     are all mixtures of $\ket{\alpha_1}^\sA \ket{\beta_1}^\sB$
     with any pure or mixed state from $\left\{\ket{\alpha_2}^\sA\right\} \ox
     \Span\left\{ \ket{\beta_2}^\sB, \ket{\beta_3}^\sB \right\}$, so:
     \begin{equation}
     \cD^S_\tsep = \operatorname{conv}
     \left[\left\{\proj{\alpha_1}^\sA \ox \proj{\beta_1}^\sB\right\} \cup
     \cD\left(\left\{\ket{\alpha_2}^\sA\right\} \ox
     \Span\left\{\ket{\beta_2}^\sB, \ket{\beta_3}^\sB\right\}\right)\right],
     \end{equation}
     as needed.
\end{enumerate}
The proof for the case (3,2) is symmetric.

\subsection{\label{subsec:proof_22}(2,2)}
We can assume, without loss of generality,
that $\ket{\alpha_1}^\sA \ne \ket{\alpha_2}^\sA$
and $\ket{\beta_1}^\sB \ne \ket{\beta_2}^\sB$,
so $\ket{\alpha_3}^\sA = a\ket{\alpha_1}^\sA + b\ket{\alpha_2}^\sA$
and $\ket{\beta_3}^\sB = c\ket{\beta_1}^\sB + d\ket{\beta_2}^\sB$.
As before, using local invertible linear maps, we can
map $\ket{\alpha_1}^\sA \mapsto \ket{1}^\sA$ ~ , ~
$\ket{\alpha_2}^\sA \mapsto \ket{2}^\sA$ ~ , ~
$\ket{\beta_1}^\sB \mapsto \ket{1}^\sB$, and
$\ket{\beta_2}^\sB \mapsto \ket{2}^\sB$,
which also maps $\ket{\alpha_3}^\sA \mapsto a\ket{1}^\sA+b\ket{2}^\sA$
and $\ket{\beta_3}^\sB \mapsto c\ket{1}^\sB+d\ket{2}^\sB$.
We can thus write a general state in $S$ as a superposition:
\begin{equation}\begin{split}
  \label{eq:2-2}
  u\ket{\alpha_1}^\sA \ket{\beta_1}^\sB + v\ket{\alpha_2}^\sA \ket{\beta_2}^\sB
       + w\ket{\alpha_3}^\sA \ket{\beta_3}^\sB
       &\mapsto u\ket{1}^\sA \ket{1}^\sB + v\ket{2}^\sA \ket{2}^\sB \\
          &+ wac\ket{1}^\sA \ket{1}^\sB
          + wbd\ket{2}^\sA \ket{2}^\sB
          + wad\ket{1}^\sA \ket{2}^\sB + wbc\ket{2}^\sA \ket{1}^\sB \\
       &= (u+wac)\ket{1}^\sA \ket{1}^\sB + (v+wbd)\ket{2}^\sA \ket{2}^\sB \\
       &+ wad\ket{1}^\sA \ket{2}^\sB + wbc\ket{2}^\sA \ket{1}^\sB.
\end{split}\end{equation}
This is a product state if and only if the determinant
of the coefficients is $0$, i.e.
\begin{equation}\label{product_det}
  (u+wac)(v+wbd) = w^2 abcd.
\end{equation}
There are thus two subcases:
\begin{enumerate}
\item If either the three states $\ket{\alpha_1}^\sA, \ket{\alpha_2}^\sA,
     \ket{\alpha_3}^\sA$ or the three states $\ket{\beta_1}^\sB,
     \ket{\beta_2}^\sB, \ket{\beta_3}^\sB$ are not in general position
     (that is, if not every two of these states are linearly independent),
     then one of $a$, $b$, $c$, or $d$ is $0$. Without loss of generality,
     assume $a = 0$, so $\ket{\alpha_2}^\sA=\ket{\alpha_3}^\sA$.
     This means in particular that the following three states are in $S$:
     $\ket{\alpha_1}^\sA \ket{\beta_1}^\sB$,
     $\ket{\alpha_2}^\sA \ket{\beta_2}^\sB$, and
     $\ket{\alpha_3}^\sA \ket{\beta_3}^\sB =
     \ket{\alpha_2}^\sA \left(c\ket{\beta_1}^\sB+d\ket{\beta_2}^\sB\right)$,
     so by linearity we deduce that
     $\ket{\alpha_2}^\sA \ket{\beta_1}^\sB$ is also in $S$
     (note that $c \neq 0$, because otherwise the equality
     $\ket{\alpha_2}^\sA \ket{\beta_2}^\sB =
     \ket{\alpha_3}^\sA \ket{\beta_3}^\sB$ would hold,
     which would contradict the fact $\dim S = 3$). Thus:
     \begin{equation}
     S = \Span\left\{ \ket{\alpha_1}^\sA \ket{\beta_1}^\sB,
     \ket{\alpha_2}^\sA \ket{\beta_1}^\sB,
     \ket{\alpha_2}^\sA \ket{\beta_2}^\sB \right\}.
     \end{equation}
     $S$ is thus the span of a union of the two local qubit spaces
     $S_1 \triangleq \left\{\ket{\alpha_2}^\sA\right\}
     \ox \Span\left\{\ket{\beta_1}^\sB, \ket{\beta_2}^\sB\right\}$
     and $S_2 \triangleq \Span\left\{\ket{\alpha_1}^\sA,
     \ket{\alpha_2}^\sA\right\}
     \ox \left\{\ket{\beta_1}^\sB\right\}$,
     which intersect in $\ket{\alpha_2}^\sA \ket{\beta_1}^\sB$.
     Thus, the separable pure and mixed states over $S$
     are exactly all mixtures (convex combinations)
     of any pure or mixed state over the space $S_1$
     with any pure or mixed state over the space $S_2$. Formally:
     \begin{eqnarray}
     \cD^S_\tsep &=& \operatorname{conv}\left[\cD(S_1) \cup \cD(S_2)\right]
     \nonumber \\
     &=& \operatorname{conv}\left[\cD\left(\left\{\ket{\alpha_2}^\sA\right\} \ox
     \Span\left\{\ket{\beta_1}^\sB, \ket{\beta_2}^\sB\right\}\right) \cup
     \cD\left(\Span\left\{\ket{\alpha_1}^\sA, \ket{\alpha_2}^\sA\right\} \ox
     \left\{\ket{\beta_1}^\sB\right\}\right)\right] \nonumber \\
     &=& \operatorname{conv}\left[\cD\left(\left\{\ket{\alpha_2}^\sA\right\} \ox
     B_\tsep\right) \cup
     \cD\left(A_\tsep \ox \left\{\ket{\beta_1}^\sB\right\}\right)\right],
     \end{eqnarray}
     as needed.
 
\item Otherwise (if both $\ket{\alpha_1}^\sA, \ket{\alpha_2}^\sA,
     \ket{\alpha_3}^\sA$ and $\ket{\beta_1}^\sB, \ket{\beta_2}^\sB,
     \ket{\beta_3}^\sB$ are in general position),
     this means that $a,b,c,d \neq 0$.
     We can thus choose new variables $x \triangleq u+wac$ ~ , ~
     $y \triangleq v+wbd$, and $z \triangleq wbc$,
     so that Eq.~\eqref{product_det} becomes
     \begin{equation}\label{product_det_simp}
       xy = w^2 abcd = z^2\frac{ad}{bc},
     \end{equation}
     which means that all product states in $S$ are of the form:
     \begin{equation}\begin{split}
       \label{eq:prod-alt}
       u\ket{\alpha_1}^\sA \ket{\beta_1}^\sB
         + v\ket{\alpha_2}^\sA \ket{\beta_2}^\sB
         + w\ket{\alpha_3}^\sA \ket{\beta_3}^\sB
         &\mapsto x\ket{1}^\sA \ket{1}^\sB
         + y\ket{2}^\sA \ket{2}^\sB \\
         &+ z{\textstyle\frac{ad}{bc}}\ket{1}^\sA \ket{2}^\sB
         + z\ket{2}^\sA \ket{1}^\sB \\
         &\propto x^2\ket{1}^\sA \ket{1}^\sB
         + xy\ket{2}^\sA \ket{2}^\sB \\
         &+ xz{\textstyle\frac{ad}{bc}}\ket{1}^\sA \ket{2}^\sB
         + xz\ket{2}^\sA \ket{1}^\sB \\
         &= x^2\ket{1}^\sA \ket{1}^\sB
         + z^2{\textstyle\frac{ad}{bc}}\ket{2}^\sA \ket{2}^\sB \\
         &+ xz{\textstyle\frac{ad}{bc}}\ket{1}^\sA \ket{2}^\sB
         + xz\ket{2}^\sA \ket{1}^\sB \\
         &= \left(x\ket{1}^\sA + z\ket{2}^\sA\right)
         \left(x\ket{1}^\sB + z{\textstyle\frac{ad}{bc}}\ket{2}^\sB\right).
     \end{split}\end{equation}
     We can thus see that $S$ includes an infinite number of product states,
     all of the form $\gamma \left(x\ket{\alpha_1}^\sA +
     z\ket{\alpha_2}^\sA\right)
     \left(x\ket{\beta_1}^\sB + z \frac{ad}{bc}\ket{\beta_2}^\sB\right)$
     for some $x, z \in \mathbb{C}$ and a normalisation factor $\gamma$.
     Accordingly, we can define an invertible linear map
     $L:A_\tsep \mapsto B_\tsep$ mapping:
     \begin{eqnarray}
     L \ket{\alpha_1}^\sA &=& \ket{\beta_1}^\sB, \\
     L \ket{\alpha_2}^\sA &=& \frac{ad}{bc} \ket{\beta_2}^\sB,
     \end{eqnarray}
     and we conclude that the set of separable states in $S$ is exactly:
     \begin{equation}\label{eq:case22_sep}
     \left\{\ket{\Psi}^{\sA\sB}
     \triangleq \gamma \ket{\psi}^\sA \left(L \ket{\psi}\right)^\sB
     ~ : ~ \ket{\psi}^\sA \in A_\tsep ~ , ~ \gamma \in \mathbb{C} ~ , ~
     \left\|\ket{\Psi}^{\sA\sB}\right\| = 1 \right\}.
     \end{equation}
     Thus, the separable pure and mixed states over $S$
     are exactly all mixtures (convex combinations) of the states
     in Eq.~\eqref{eq:case22_sep}. Formally:
     \begin{equation}
     \cD^S_\tsep = \operatorname{conv}
     \left\{\proj{\Psi}^{\sA\sB}~:~\ket{\Psi}^{\sA\sB}
     = \gamma \ket{\psi}^\sA \left(L \ket{\psi}\right)^\sB~,~
     \ket{\psi}^\sA \in A_\tsep~,~\gamma \in \mathbb{C}~,~
     \left\|\ket{\Psi}^{\sA\sB}\right\| = 1 \right\},
     \end{equation}
     as needed.
\end{enumerate}
\vspace{-6.5mm}
\end{proof}

\section{Geometric descriptions and figures}
\label{sec:illustrations}
In Theorem~\ref{thm:main} we gave an algebraic description of the possible sets
$\cD^S_\tsep$ of separable pure and mixed states over $S$.
Now we can also give a geometric description of the sets $\cD^S_\tsep$
as closed convex subsets of the qutrit space $\cD(S)$
which is shown schematically in Fig.~\ref{fig:qutrit}.
The possible sets $\cD^S_\tsep$ are denoted according to the relevant case
in Theorem~\ref{thm:main}---that is, $\cD^{S_{(1,3)}}_\tsep$,
$\cD^{S_{(3,3)}}_\tsep$, $\cD^{S_{(2,3)-i}}_\tsep$, etc.

In this section we develop three-dimensional stereographic illustrations of
the possible classes from Theorem~\ref{thm:main}; because the sets $\cD^S_\tsep$
typically have dimension higher than $3$, we use lower-dimensional
sections and projections to present these schematic illustrations.
Fig.~\ref{fig:qutrit} is courtesy of Tadeusz Dorozi\`{n}ski~\cite{Dorozinski},
and the remaining figures (Figs.~\ref{fig:ctrit}--\ref{fig:phi-x-phi})
were produced using \emph{CalcPlot3D}~\cite{CalcPlot3D}.

Recall that $\cD(S)$ is the set of all density matrices
acting on the Hilbert space $S$,
and $\operatorname{conv}(A)$ is the convex hull of $A$---that is,
the set of all convex combinations (probabilistic mixtures)
of the states in the set $A \subseteq \cD(S)$.

We now present all possible classes from Theorem~\ref{thm:main}
and their geometric descriptions:

\begin{figure}[ht]
  \centering
  \includegraphics[width=6cm]{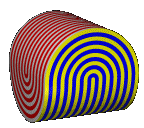} \qquad
  \includegraphics[width=5.5cm]{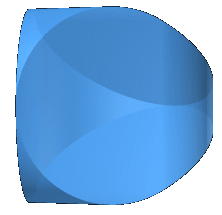}
  \caption{\textbf{The qutrit Hilbert space and cases (1,3) and (3,1):}
    $\cD(S) \equiv \cD(\CC^3)$ is an $8$-dimensional convex set
    whose extreme points
    form a $4$-dimensional smooth manifold, and whose other faces all
    have dimension $3$. It is a highly symmetric body with
    an $SU(3)$ symmetry group that acts transitively
    on the extreme points and also on the $3$-dimensional faces.
    It is also its own polar.
    There is no three-dimensional convex body presenting all these features,
    but other researchers have tried to construct
    visualisations~\cite{Bloor:qutrit,BWZ:apophatic,KKLM16,SWZ18,AK20,Qutrit-shape}
    that preserve some of the features,
    typically suggesting projections into three dimensions.
    Among these, the so-called sphericyl (left) and elliptic sphericyl (right)
    were previously considered in~\cite{BWZ:apophatic}
    as decent intuitive illustrations:
    they are convex hulls of smooth curves
    that are each a union of four tangent circular arcs
    on a sphere.
    These pictures are courtesy of Tadeusz Dorozi\`{n}ski~\cite{Dorozinski}.
  \label{fig:qutrit}}
\end{figure}

\medskip
\textbf{(1,3)/(3,1):} The set of separable states
$\cD^{S_{(1,3)}}_\tsep = \cD^{S_{(3,1)}}_\tsep = \cD(S)$
is the entire set of pure and mixed states over
the qutrit Hilbert space $S \equiv \CC^3$.
Unlike a qubit Hilbert space (whose corresponding Bloch sphere,
which represents all pure and mixed states over the qubit,
is three-dimensional and can thus be visualised),
the set of all pure and mixed states over a qutrit space is $8$-dimensional
and cannot be easily represented in three dimensions. Formally:
\begin{equation}
  \cD^{S_{(1,3)}}_\tsep = \cD^{S_{(3,1)}}_\tsep = \cD(S) \equiv \cD(\CC^3),
\end{equation}
the dimension of the set of separable states is indeed
$\dim \cD^{S_{(1,3)}}_\tsep = \dim \cD^{S_{(3,1)}}_\tsep = 8$,
and the set $S$ is the product of a fixed state with a local qutrit space:
in case (1,3), $S = \left\{\ket{\alpha_1}^\sA\right\} \ox
\Span\left\{\ket{\beta_1}^\sB, \ket{\beta_2}^\sB, \ket{\beta_3}^\sB\right\}$, and
in case (3,1), $S = \Span\left\{\ket{\alpha_1}^\sA, \ket{\alpha_2}^\sA,
\ket{\alpha_3}^\sA\right\} \ox \left\{\ket{\beta_1}^\sB\right\}$,
which are all product states.

A decent schematic illustration of the qutrit space can be obtained
as the convex hull of the seam on a tennis ball. For concreteness, we choose 
the sphericyl and elliptic sphericyl as described in Fig.~\ref{fig:qutrit}. 
Note that these are only intuitive ``cartoons'':
they only aim to demonstrate some of the qualitative features of $\cD(S)$
in three dimensions. In the subsequent cases we follow a more precise approach,
constructing three-dimensional sections/projections
of the resulting subsets $\cD^S_\tsep$.

\medskip
\textbf{(3,3) and (2,3)/(3,2)-i:} The set of separable states
$\cD^{S_{(3,3)}}_\tsep = \cD^{S_{(2,3)-i}}_\tsep = \cD^{S_{(3,2)-i}}_\tsep$
simply consists of all mixtures of the three product states
$\ket{\alpha_1}^\sA \ket{\beta_1}^\sB, \ket{\alpha_2}^\sA \ket{\beta_2}^\sB,
\ket{\alpha_3}^\sA \ket{\beta_3}^\sB$.
Formally:
\begin{equation}\begin{split}
\cD^{S_{(3,3)}}_\tsep 
  &= \cD^{S_{(2,3)-i}}_\tsep = \cD^{S_{(3,2)-i}}_\tsep \\
  &= \operatorname{conv}
       \left\{\proj{\alpha_1}^\sA \ox \proj{\beta_1}^\sB ~ , ~
              \proj{\alpha_2}^\sA \ox \proj{\beta_2}^\sB ~ , ~
              \proj{\alpha_3}^\sA \ox \proj{\beta_3}^\sB
       \right\},
\end{split}\end{equation}
and the dimension is $\dim \cD^{S_{(3,3)}}_\tsep = \dim \cD^{S_{(2,3)-i}}_\tsep
= \dim \cD^{S_{(3,2)-i}}_\tsep = 2$.
Geometrically, this set simply
forms a triangle with these three product states as corners
(which is exactly a classical two-dimensional probability simplex
of ternary probability distributions), as shown in Fig.~\ref{fig:ctrit}. 

\begin{figure}[ht]
  \centering
  \includegraphics[width=8cm]{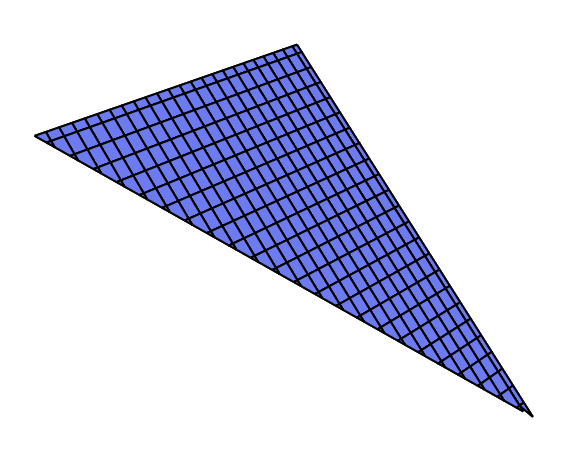}
  \caption{\textbf{Cases (3,3) and (2,3)/(3,2)-i:}
    The set of separable states $\cD^{S_{(3,3)}}_\tsep
    = \cD^{S_{(2,3)-i}}_\tsep = \cD^{S_{(3,2)-i}}_\tsep$ is
    a triangle whose vertices correspond to the pure product states
    $\proj{\alpha_1}^\sA \ox \proj{\beta_1}^\sB$,
    $\proj{\alpha_2}^\sA \ox \proj{\beta_2}^\sB$, and
    $\proj{\alpha_3}^\sA \ox \proj{\beta_3}^\sB$.
  \label{fig:ctrit}}
\end{figure}

\medskip
\textbf{(2,3)/(3,2)-ii:} In case (2,3)-ii, the set of separable states
$\cD^{S_{(2,3)-ii}}_\tsep$ consists of all mixtures of the state
$\proj{\alpha_1}^\sA \ox \proj{\beta_1}^\sB$
with any pure or mixed state over the local qubit space
$\left\{\ket{\alpha_2}^\sA\right\}
\ox \Span\left\{\ket{\beta_2}^\sB, \ket{\beta_3}^\sB\right\}$;
and similarly, in case (3,2)-ii, the set of separable states
$\cD^{S_{(3,2)-ii}}_\tsep$ consists of all mixtures of the state
$\proj{\alpha_1}^\sA \ox \proj{\beta_1}^\sB$
with any pure or mixed state over the local qubit space
$\Span\left\{\ket{\alpha_2}^\sA, \ket{\alpha_3}^\sA\right\}
\ox \left\{\ket{\beta_2}^\sB\right\}$.
Formally:
\begin{eqnarray}
\cD^{S_{(2,3)-ii}}_\tsep &=& \operatorname{conv}
\left[\left\{\proj{\alpha_1}^\sA \ox \proj{\beta_1}^\sB\right\} \cup
\cD\left(\left\{\ket{\alpha_2}^\sA\right\}
\ox \Span\left\{\ket{\beta_2}^\sB, \ket{\beta_3}^\sB\right\}\right)\right], \\
\cD^{S_{(3,2)-ii}}_\tsep &=& \operatorname{conv}
\left[\left\{\proj{\alpha_1}^\sA \ox \proj{\beta_1}^\sB\right\} \cup
\cD\left(\Span\left\{\ket{\alpha_2}^\sA, \ket{\alpha_3}^\sA\right\}
\ox \left\{\ket{\beta_2}^\sB\right\}\right)\right],
\end{eqnarray}
and the dimension is $\dim \cD^{S_{(2,3)-ii}}_\tsep
= \dim \cD^{S_{(3,2)-ii}}_\tsep = 4$.
Thus, in both cases, the set of separable states forms a spherical cone
connecting a single point (representing the state
$\proj{\alpha_1}^\sA \ox \proj{\beta_1}^\sB$)
with a Bloch sphere (representing the local qubit space)
in an overall four-dimensional space.
The extreme points of this set form two connected components: 
the single point and the two-dimensional surface of the Bloch sphere.
This body has a two-dimensional family of faces of dimension 1
and a single face of dimension three.
To illustrate this in a three-dimensional figure,
we present the convex hull (that is, the set of all convex combinations)
of the equatorial qubits of the Bloch sphere (a full disc)
with the single point, and we get a three-dimensional cone over a circular disc,
shown in Fig.~\ref{fig:qubit-cone}. 

\begin{figure}[ht]
  \centering
  \includegraphics[width=8cm]{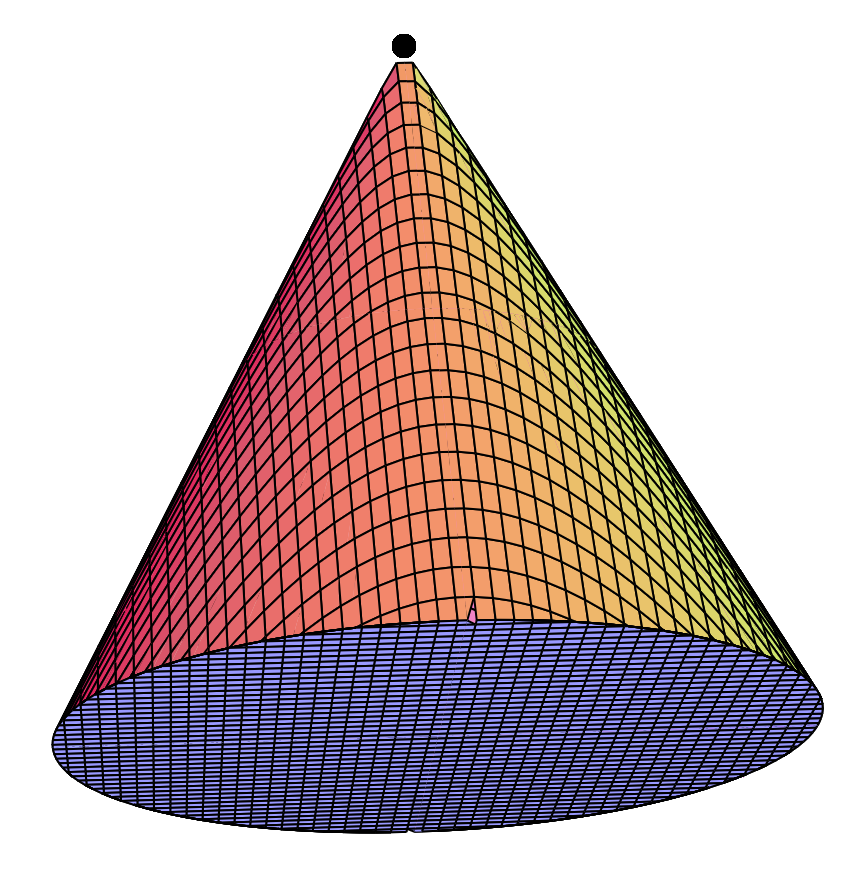}
  \caption{\textbf{Case (2,3)/(3,2)-ii:}
    The sets of separable states $\cD^{S_{(2,3)-ii}}_\tsep$
    and $\cD^{S_{(3,2)-ii}}_\tsep$ are $4$-dimensional spherical cones.
    Here we present one of their three-dimensional sections,
    which is a convex hull
    of the qubit states along the equator of the Bloch sphere
    in addition to a single point representing
    $\proj{\alpha_1}^\sA \ox \proj{\beta_1}^\sB$.
  \label{fig:qubit-cone}}
\end{figure}

\medskip
\textbf{(2,2)-i:} The set of separable states
$\cD^{S_{(2,2)-i}}_\tsep$ consists of all mixtures
of any pure or mixed state over the local qubit space
$S_1 \triangleq \left\{\ket{\alpha_2}^\sA\right\} \ox
\Span\left\{\ket{\beta_1}^\sB, \ket{\beta_2}^\sB\right\}$
with any pure or mixed state over the local qubit space
$S_2 \triangleq \Span\left\{\ket{\alpha_1}^\sA, \ket{\alpha_2}^\sA\right\} \ox
\left\{\ket{\beta_1}^\sB\right\}$; note that the intersection of the two spaces
is $S_1 \cap S_2 = \left\{\ket{\alpha_2}^\sA \ox \ket{\beta_1}^\sB\right\}$.
Formally:
\begin{equation}\begin{split}
\cD^{S_{(2,2)-i}}_\tsep &= \operatorname{conv}\left[\cD(S_1) \cup \cD(S_2)\right]
\\
&= \operatorname{conv}\left[\cD\left(\left\{\ket{\alpha_2}^\sA\right\} \ox
\Span\left\{\ket{\beta_1}^\sB, \ket{\beta_2}^\sB\right\}\right) \cup
\cD\left(\Span\left\{\ket{\alpha_1}^\sA, \ket{\alpha_2}^\sA\right\} \ox
\left\{\ket{\beta_1}^\sB\right\}\right)\right] \\
&= \operatorname{conv}\left[\cD\left(\left\{\ket{\alpha_2}^\sA\right\}
\ox B_\tsep\right)
\cup \cD\left(A_\tsep \ox \left\{\ket{\beta_1}^\sB\right\}\right)\right],
\end{split}\end{equation}
and the dimension is $\dim \cD^{S_{(2,2)-i}}_\tsep = 6$.
Thus, the set of separable states is the convex hull
(that is, the set of all convex combinations)
of two Bloch spheres intersecting at a single point,
where the single point represents $\ket{\alpha_2}^\sA \ox \ket{\beta_1}^\sB$.
The two Bloch spheres are transversal to each other:
they are two three-dimensional spaces intersecting at a single point.
We can thus repeat the same idea as in case (2,3)/(3,2)-ii
and take the equatorial qubits of both Bloch spheres,
which are two full discs transversal to each other;
however, the convex hull of the two discs is still $4$-dimensional.
We thus project this figure into three dimensions,
obtaining the convex hull of two circles intersecting at one point,
where their respective two-dimensional planes intersect at right angles,
as illustrated in Fig.~\ref{fig:qubit+qubit}. 

\begin{figure}[ht]
  \centering
  \includegraphics[width=8cm]{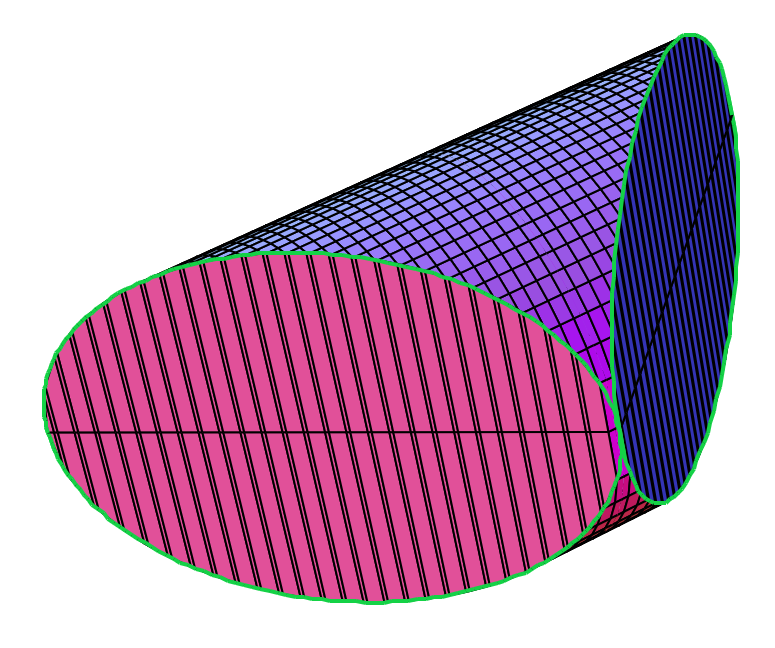}
  \caption{\textbf{Case (2,2)-i:}
    The set of separable states $\cD^{S_{(2,2)-i}}_\tsep$
    is the $6$-dimensional convex hull of two Bloch spheres intersecting
    at one point. Here we present one of its four-dimensional sections,
    which is a convex hull of two Bloch sphere
    equators intersecting at one point, projected into three dimensions.
  \label{fig:qubit+qubit}}
\end{figure}

\begin{figure}[ht]
  \centering
  \includegraphics[width=8cm]{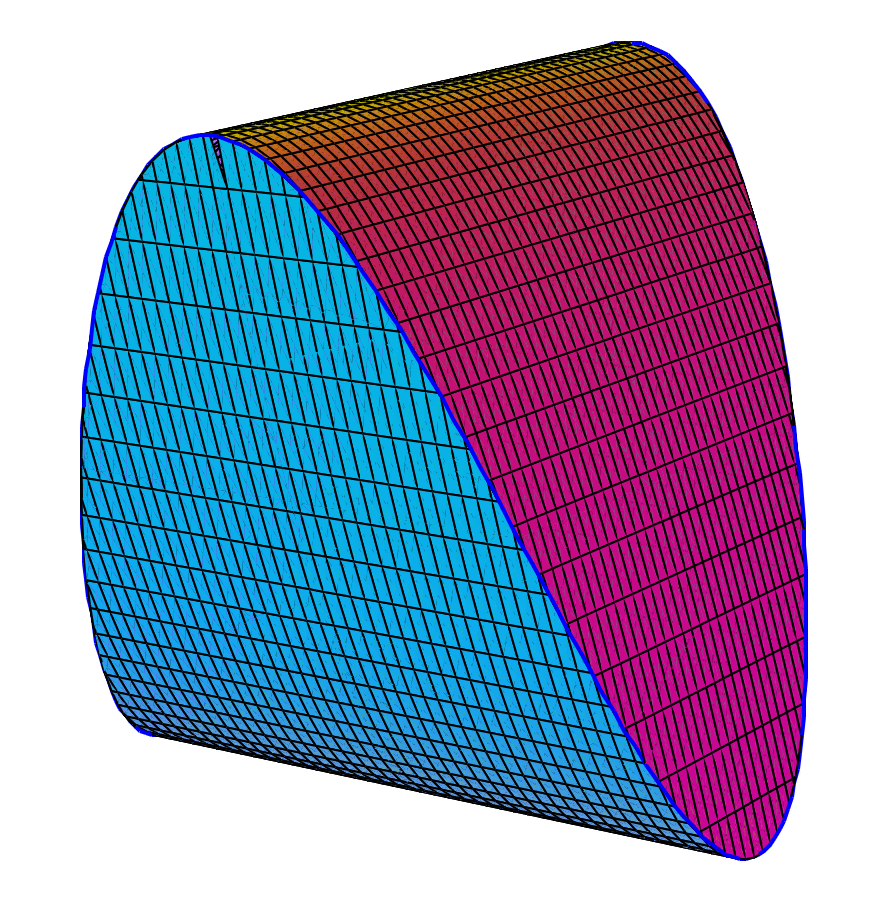}
  \caption{\textbf{Case (2,2)-ii:}
    This three-dimensional section of the full eight-dimensional set
    $\cD^{S_{(2,2)-i}}_\tsep$ is the convex hull of
    $\left(x=\cos \varphi ~ , ~ y=\sin \varphi ~ , ~
    z=\cos^2 \varphi-\sin^2 \varphi\right)$ for $0 \le \varphi \le 2\pi$,
    which is the intersection of two isomorphic degenerate paraboloids
    and represents a projection of
    the possible product states $\ket{\psi}^\sA \ket{\psi}^\sB$
    (up to local invertible maps), showing only those of
    the pure states $\ket{\psi}$ which are on the Bloch sphere's equator.
    Except for its one-dimensional manifold of extreme points,
    its only other faces are two disjoint one-dimensional families of lines.
  \label{fig:phi-x-phi}}
\end{figure}

\medskip
\textbf{(2,2)-ii:} The set of separable states
$\cD^{S_{(2,2)-ii}}_\tsep$ consists of all mixtures
of states of the form $\ket{\psi}^\sA \left(L \ket{\psi}\right)^\sB$,
up to normalisation, for a specific (known) invertible linear map $L$. Formally:
\begin{equation}
\cD^S_\tsep = \operatorname{conv}
\left\{\proj{\Psi}^{\sA\sB}~:~\ket{\Psi}^{\sA\sB}
= \gamma \ket{\psi}^\sA \left(L \ket{\psi}\right)^\sB~,~
\ket{\psi}^\sA \in \Span\left\{\ket{\alpha_1}^\sA, \ket{\alpha_2}^\sA\right\}~,~
\gamma \in \mathbb{C}~,~\left\|\ket{\Psi}^{\sA\sB}\right\| = 1 \right\},
\end{equation}
where, given that
$\ket{\alpha_3}^\sA = a\ket{\alpha_1}^\sA + b\ket{\alpha_2}^\sA$
and $\ket{\beta_3}^\sB = c\ket{\beta_1}^\sB + d\ket{\beta_2}^\sB$
for $a,b,c,d \in \CC \setminus \{0\}$ that were defined
in the proof of Theorem~\ref{thm:main} (case (2,2))
in Subsection~\ref{subsec:proof_22}, the map $L$ is defined as:
\begin{eqnarray}
L \ket{\alpha_1}^\sA &=& \ket{\beta_1}^\sB, \\
L \ket{\alpha_2}^\sA &=& \frac{ad}{bc} \ket{\beta_2}^\sB, \\
L \ket{\alpha_3}^\sA &=& a \ket{\beta_1}^\sB + \frac{ad}{c} \ket{\beta_2}^\sB
= \frac{a}{c} \ket{\beta_3}^\sB,
\end{eqnarray}
and the dimension is $\dim \cD^{S_{(2,2)-ii}}_\tsep = 8$.
Thus, the set of separable states is an $8$-dimensional convex subset
of the full qutrit space $\cD(S) \equiv \cD(\CC^3)$,
and its extreme points
form the two-dimensional smooth manifolds of pure product states
$\proj{\psi}^\sA \ox \left(L \proj{\psi} L^\dagger\right)^\sB$.
We notice that this is the only case (except the trivial case (1,3)/(3,1))
where the set of separable states has the same dimension
as the full space $\cD(S)$; in all other cases the set of separable states
has a lower dimension and is thus of measure zero.

An interesting special case is $ad=bc$,
because it satisfies $\left(L \ket{\psi}\right)^\sB \equiv \ket{\psi}^\sA$
if we apply the local invertible operations mapping
$\ket{\alpha_1}^\sA \mapsto \ket{1}^\sA$ ~ , ~
$\ket{\alpha_2}^\sA \mapsto \ket{2}^\sA$ ~ , ~
$\ket{\beta_1}^\sB \mapsto \ket{1}^\sB$, and
$\ket{\beta_2}^\sB \mapsto \ket{2}^\sB$
and take the equivalences
$\ket{1}^\sA \equiv \ket{1}^\sB$ and $\ket{2}^\sA \equiv \ket{2}^\sB$.
In this case, the set of product states in $S$ is equivalent to
the set of all states of form $\ket{\psi}^\sA \ket{\psi}^\sB$,
which is the set of all product states in the \emph{symmetric subspace}
(also known as the ``triplet subspace'')
of the two-qubit Hilbert space $\cH_2 \times \cH_2$ (that is,
the three-dimensional subspace spanned by
$\left\{\ket{1}^\sA \ket{1}^\sB ~ , ~ \ket{2}^\sA \ket{2}^\sB ~ , ~
\ket{\Psi^+}^{\sA\sB} \triangleq \frac{\ket{1}^\sA \ket{2}^\sB +
\ket{2}^\sA \ket{1}^\sB}{\sqrt{2}}\right\}$).
Moreover, we can see that given this special case,
the general case (2,2)-ii can simply be obtained by applying
a local invertible operation $X \ox Y$ (not affecting entanglement)
to the special case.

To obtain the figure, we look at the special case $ad = bc$;
we apply the local invertible operations mapping
$\ket{\alpha_1}^\sA \mapsto \ket{1}^\sA$ ~ , ~
$\ket{\alpha_2}^\sA \mapsto \ket{2}^\sA$ ~ , ~
$\ket{\beta_1}^\sB \mapsto \ket{1}^\sB$, and
$\ket{\beta_2}^\sB \mapsto \ket{2}^\sB$,
so that the product states are equivalent to $\ket{\psi}^\sA \ket{\psi}^\sB$
where $\ket{\psi}^\sA \in \Span \left\{\ket{1}^\sA, \ket{2}^\sA\right\}$;
and we limit our view to the section of equatorial qubit states---that is,
qubits of the form:
\begin{equation}
\ket{\psi}^\sA = \frac{\ket{1}^\sA + e^{i\varphi} \ket{2}^\sA}{\sqrt{2}}
\end{equation}
for $0 \le \varphi \le 2\pi$. All product states are thus of the form:
\begin{equation}
\begin{split}
\ket{\psi}^\sA \ket{\psi}^\sB
&= \frac{\ket{1}^\sA + e^{i\varphi} \ket{2}^\sA}{\sqrt{2}}
\otimes \frac{\ket{1}^\sB + e^{i\varphi} \ket{2}^\sB}{\sqrt{2}} \\
&= \frac{\ket{1}^\sA \ket{1}^\sB
+ e^{i\varphi} \ket{1}^\sA \ket{2}^\sB + e^{i\varphi} \ket{2}^\sA \ket{1}^\sB
+ e^{2i\varphi} \ket{2}^\sA \ket{2}^\sB}{2} \\
&= \frac{\ket{11}^{\sA\sB}}{2}
+ \frac{e^{i\varphi} \ket{\Psi^+}^{\sA\sB}}{\sqrt{2}}
+ \frac{e^{2i\varphi} \ket{22}^{\sA\sB}}{2},
\end{split}
\end{equation}
(where we denote $\ket{\Psi^+}^{\sA\sB}
\triangleq \frac{\ket{1}^\sA \ket{2}^\sB + \ket{2}^\sA \ket{1}^\sB}{\sqrt{2}}$),
and the resulting density matrices are:
\begin{equation}
\begin{split}
\proj{\psi}^\sA \otimes \proj{\psi}^\sB
&= \frac{\proj{11}^{\sA\sB}}{4} + \frac{\proj{\Psi^+}^{\sA\sB}}{2}
+ \frac{\proj{22}^{\sA\sB}}{4} \\
&+ e^{i\varphi} \frac{\ket{\Psi^+}\bra{11}^{\sA\sB}}{2\sqrt{2}}
+ e^{-i\varphi} \frac{\ket{11}\bra{\Psi^+}^{\sA\sB}}{2\sqrt{2}}
+ e^{-i\varphi} \frac{\ket{\Psi^+}\bra{22}^{\sA\sB}}{2\sqrt{2}} \\
&+ e^{i\varphi} \frac{\ket{22}\bra{\Psi^+}^{\sA\sB}}{2\sqrt{2}}
+ e^{2i\varphi} \frac{\ket{22}\bra{11}^{\sA\sB}}{4}
+ e^{-2i\varphi} \frac{\ket{11}\bra{22}^{\sA\sB}}{4}.
\end{split}
\end{equation}
Thus, the resulting density matrices are of the following form:
\begin{equation}
\proj{\psi}^\sA \otimes \proj{\psi}^\sB
= C + e^{i\varphi} A + e^{-i\varphi} A^\dagger
+ e^{2i\varphi} B + e^{-2i\varphi} B^\dagger
= C + e^{i\varphi} A + \left(e^{i\varphi} A\right)^\dagger
+ e^{2i\varphi} B + \left(e^{2i\varphi} B\right)^\dagger,
\end{equation}
where $A, A^\dagger, B, B^\dagger, C$ are linearly independent matrices.
The resulting set of separable states is the convex hull
of an infinite number of extreme points (representing these product states)
forming a smooth curve;
these extreme points are affine-linearly parametrised by the pair of complex numbers
$\left(e^{i\varphi}, e^{2i\varphi}\right)$ for $0 \le \varphi \le 2\pi$.
Since the resulting body is still four-dimensional,
we project it into three dimensions by retaining only the real part
of the second complex number $e^{2i\varphi}$, getting the parametrisation
$\left(e^{i\varphi} ~ , ~
\cos(2\varphi) = \cos^2 \varphi - \sin^2 \varphi\right)$, or equivalently
$\left(\cos \varphi ~ , ~ \sin \varphi ~ , ~
\cos^2 \varphi - \sin^2 \varphi\right)$.
Using real number parameters $x,y,z \in \RR$, the obtained convex body is
\begin{equation}
  \operatorname{conv}\left\{(x,y,z) ~ : ~ x^2+y^2=1 ~ , ~ z=x^2-y^2\right\}
   = \left\{(x,y,z) ~ : ~ x^2+y^2\leq 1 ~ , ~ 2x^2-1\leq z \leq 1-2y^2\right\},
\end{equation}
which is shown in Fig.~\ref{fig:phi-x-phi}.

\section{Generalisation to multipartite systems}
\label{sec:multipartite}
We can now generalise our result regarding bipartite systems
(Theorem~\ref{thm:main}) to all multipartite systems.
In~\cite{BLM}, the generalised classification
of \emph{fully separable} states in two-dimensional subspaces
of multipartite systems (Theorem~11 in~\cite{BLM})
turned out to be completely identical to the classification of
separable states in two-dimensional subspaces of bipartite systems
(Theorem~9 in~\cite{BLM}, stated in our paper as Theorem~\ref{thm:BLM}).
However, in our paper, we will be able to find several differences
between the bipartite case and the multipartite case:
most notably, some bipartite classes disappear for multipartite systems.

Similarly to Section~\ref{sec:setting},
we begin our analysis with a rank-3 quantum mixed state
$\rho \triangleq \rho^{\sA_1 \cdots \sA_k}$ on a $k$-partite system
$\cH_{\sA_1} \ox \cH_{\sA_2} \ox \cdots \ox \cH_{\sA_k}$
(where $k \geq 3$).
As before, the support of $\rho$ is the three-dimensional
subspace $S = \supp \rho$ which is spanned by the eigenstates of $\rho$,
and we define:
\begin{equation}\begin{split}
S_\tsep &\triangleq \Span \left\{\ket{\psi} \in S ~ : ~
\ket{\psi}\text{ is a product state}\right\} \\
&= \Span \left\{\ket{\psi} \in S ~ : ~
\exists \ket{\phi^{(1)}} \in \cH_{\sA_1} ~ , ~ \ldots ~ , ~
\ket{\phi^{(k)}} \in \cH_{\sA_k} ~ : ~
\ket{\psi} = \ket{\phi^{(1)}} \otimes \cdots \otimes \ket{\phi^{(k)}}\right\}.
\end{split}\end{equation}
Since the cases $\dim S_\tsep \in \{0,1,2\}$ were analysed in~\cite{BLM},
we focus again on the case $\dim S_\tsep = 3$, where $S_\tsep = S$
and it is spanned by three linearly independent product states:
\begin{equation}\label{eq:S_multi}
S = \Span \left\{
\ket{\alpha_1^{(1)}} \otimes \cdots \otimes \ket{\alpha_1^{(k)}} ~ , ~
\ket{\alpha_2^{(1)}} \otimes \cdots \otimes \ket{\alpha_2^{(k)}} ~ , ~
\ket{\alpha_3^{(1)}} \otimes \cdots \otimes \ket{\alpha_3^{(k)}}\right\}.
\end{equation}
We can now define the set of fully separable states we are going to analyse:
\begin{equation}
\cD^S_\tsep \triangleq \{\rho \in \cD(S) ~ : ~ \rho\text{ is fully separable}\},
\end{equation}
where $\cD(S)$ is (as before) the set of all density matrices
over the Hilbert space $S$, and a state $\rho$ is said to be ``fully separable''
if it is a mixture of product states
$\ket{\phi^{(1)}} \otimes \cdots \otimes \ket{\phi^{(k)}}$.

Similarly to Theorem~\ref{thm:main}, our analysis will depend on
the dimensions of the $k$ local subspaces $A^{(j)}_\tsep$ (for $1 \le j \le k$),
which are defined as follows:
\begin{equation}\label{eq:Ajsep_multi}
A^{(j)}_\tsep \triangleq \Span \left\{\ket{\alpha_1^{(j)}}, \ket{\alpha_2^{(j)}},
\ket{\alpha_3^{(j)}}\right\},
\end{equation}
and each of them can be $1$-, $2$-, or $3$-dimensional.

We would now like to point out that whenever the dimension of $A^{(j)}_\tsep$
is $1$, it does not contribute any genuine entanglement and can be removed
from our analysis. Indeed, if $\dim A^{(j)}_\tsep = 1$, then necessarily
$\ket{\alpha_1^{(j)}} = \ket{\alpha_2^{(j)}} = \ket{\alpha_3^{(j)}}$,
in which case all states in $S$ are actually tensor products of
a $(k-1)$-partite state with a constant state $\ket{\alpha_1^{(j)}}$.
This state can be trivially absorbed into another system without affecting
the analysis of entanglement and separability.
Thus, the state is \emph{genuinely} only $(k-1)$-partite, not $k$-partite.
(A similar observation regarding the bipartite case of Theorem~\ref{thm:main}
is that systems corresponding to its cases (1,3) and (3,1)
are actually not bipartite at all, but ``monopartite''.)
We can therefore assume, without loss of generality, that
$\dim A^{(j)}_\tsep \in \{2,3\}$ for all $1 \le j \le k$,
thus focusing on \emph{genuinely} multipartite systems with $k \ge 3$ subsystems.

For the analysis, we shall also use the following Lemma,
which (informally) says that three quantum states that are either
\emph{linearly independent} or \emph{linearly dependent but in general position}
become linearly independent when we take their tensor product
with another (non-trivial) Hilbert space:
\begin{lemma}\label{lemma:gp}
In a bipartite Hilbert space $\cH_\sA \ox \cH_\sB$,
if three quantum states $\ket{\alpha_1}^\sA, \ket{\alpha_2}^\sA,
\ket{\alpha_3}^\sA \in \cH_\sA$ are either \emph{linearly independent}
or \emph{linearly dependent but in general position}
(that is, any two of them are linearly independent),
then for any three quantum states $\ket{\beta_1}^\sB,
\ket{\beta_2}^\sB, \ket{\beta_3}^\sB \in \cH_\sB$
that are not all identical (up to normalisation and global phase),
the three states $\ket{\alpha_1}^\sA \ket{\beta_1}^\sB ~ , ~
\ket{\alpha_2}^\sA \ket{\beta_2}^\sB ~ , ~
\ket{\alpha_3}^\sA \ket{\beta_3}^\sB \in \cH_\sA \ox \cH_\sB$
are \emph{linearly independent}.
\end{lemma}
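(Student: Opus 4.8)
The plan is to argue by contradiction: assume a nontrivial linear relation $c_1\ket{\alpha_1}^\sA\ket{\beta_1}^\sB + c_2\ket{\alpha_2}^\sA\ket{\beta_2}^\sB + c_3\ket{\alpha_3}^\sA\ket{\beta_3}^\sB = 0$ with $(c_1,c_2,c_3)\neq 0$, and derive a contradiction with the hypotheses. Throughout I use that all the $\ket{\alpha_i}^\sA$ and $\ket{\beta_i}^\sB$ are nonzero (being normalised states), together with the standard tensor-product fact that if $\ket{\alpha_1}^\sA,\ket{\alpha_2}^\sA$ are linearly independent and $\ket{\alpha_1}^\sA\ket{v_1}^\sB + \ket{\alpha_2}^\sA\ket{v_2}^\sB = 0$, then $\ket{v_1}^\sB = \ket{v_2}^\sB = 0$ (seen by applying to the $\cH_\sA$ factor the dual functionals that isolate $\ket{\alpha_1}^\sA$ and $\ket{\alpha_2}^\sA$).

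First I would dispose of the easy case in which $\ket{\alpha_1}^\sA,\ket{\alpha_2}^\sA,\ket{\alpha_3}^\sA$ are \emph{linearly independent}. Here the conclusion in fact holds for \emph{any} nonzero $\ket{\beta_i}^\sB$, independently of the $\beta$ hypothesis: applying to the $\cH_\sA$ factor the linear functional sending $\ket{\alpha_i}^\sA\mapsto 1$ and the other two $\ket{\alpha_j}^\sA\mapsto 0$ turns the assumed relation into $c_i\ket{\beta_i}^\sB = 0$, forcing $c_i = 0$; running this for each $i$ kills all the coefficients, a contradiction. So the genuine content of the lemma lies entirely in the remaining case, where the $\ket{\alpha_i}^\sA$ are linearly dependent but in general position.

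In that case, since $\ket{\alpha_1}^\sA,\ket{\alpha_2}^\sA$ are linearly independent I write $\ket{\alpha_3}^\sA = a\ket{\alpha_1}^\sA + b\ket{\alpha_2}^\sA$, noting that $a,b\neq 0$ (if, say, $a=0$ then $\ket{\alpha_3}^\sA\propto\ket{\alpha_2}^\sA$, violating pairwise independence). Substituting into the assumed relation and collecting the two linearly independent vectors $\ket{\alpha_1}^\sA,\ket{\alpha_2}^\sA$ gives $\ket{\alpha_1}^\sA\bigl(c_1\ket{\beta_1}^\sB + c_3 a\ket{\beta_3}^\sB\bigr) + \ket{\alpha_2}^\sA\bigl(c_2\ket{\beta_2}^\sB + c_3 b\ket{\beta_3}^\sB\bigr) = 0$. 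By the tensor-product fact above, both bracketed $\cH_\sB$-vectors must vanish, yielding $c_1\ket{\beta_1}^\sB = -c_3 a\ket{\beta_3}^\sB$ and $c_2\ket{\beta_2}^\sB = -c_3 b\ket{\beta_3}^\sB$.

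Finally I extract the contradiction. If $c_3 = 0$, the two equations give $c_1\ket{\beta_1}^\sB = c_2\ket{\beta_2}^\sB = 0$, hence $c_1=c_2=0$ and the relation was trivial, a contradiction; so $c_3\neq 0$. Then, since $a,b\neq 0$ and $\ket{\beta_3}^\sB\neq 0$, both right-hand sides are nonzero multiples of $\ket{\beta_3}^\sB$, which forces $\ket{\beta_1}^\sB\propto\ket{\beta_3}^\sB$ and $\ket{\beta_2}^\sB\propto\ket{\beta_3}^\sB$. Thus all three $\ket{\beta_i}^\sB$ are proportional, i.e.\ identical up to normalisation and global phase, contradicting the hypothesis that they are not all identical. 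The main (indeed essentially the only) obstacle is this last step, as it is precisely where the ``not all identical'' assumption on the $\ket{\beta_i}^\sB$ must enter; everything else is routine linear algebra.
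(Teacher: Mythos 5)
Your proof is correct and follows essentially the same route as the paper's: assume a nontrivial dependence, expand $\ket{\alpha_3}^\sA$ in terms of $\ket{\alpha_1}^\sA,\ket{\alpha_2}^\sA$ (with nonzero coefficients by general position), use the linear independence of $\ket{\alpha_1}^\sA,\ket{\alpha_2}^\sA$ to force both $\cH_\sB$-components to vanish, and conclude that all three $\ket{\beta_i}^\sB$ are proportional, contradicting the hypothesis. The only difference is organisational bookkeeping: the paper first shows the three product states are pairwise linearly independent so that the dependence relation among them has nonzero coefficients, whereas you keep arbitrary coefficients $c_i$ and dispose of the degenerate sub-cases (the $\alpha$'s linearly independent, or $c_3=0$) directly, which is equally valid.
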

\begin{proof}
Assume by contradiction that the three states
$\ket{\alpha_1}^\sA \ket{\beta_1}^\sB ~ , ~
\ket{\alpha_2}^\sA \ket{\beta_2}^\sB ~ , ~
\ket{\alpha_3}^\sA \ket{\beta_3}^\sB$
are linearly dependent. They must be in general position,
because any two of them cannot be equal
(otherwise, two of $\ket{\alpha_1}^\sA, \ket{\alpha_2}^\sA,
\ket{\alpha_3}^\sA$ would be equal).
Therefore, without loss of generality,
$\ket{\alpha_3}^\sA \ket{\beta_3}^\sB \in
\Span \left\{\ket{\alpha_1}^\sA \ket{\beta_1}^\sB ~ , ~
\ket{\alpha_2}^\sA \ket{\beta_2}^\sB\right\}$,
so there are $a,b \in \CC \setminus \{0\}$ such that:
\begin{equation}\label{eq:proof_depen}
\ket{\alpha_3}^\sA \ket{\beta_3}^\sB =
a \ket{\alpha_1}^\sA \ket{\beta_1}^\sB +
b \ket{\alpha_2}^\sA \ket{\beta_2}^\sB.
\end{equation}
This equation means, in particular, that $\ket{\alpha_1}^\sA, \ket{\alpha_2}^\sA,
\ket{\alpha_3}^\sA \in \cH_\sA$ cannot be linearly independent,
so they must be linearly dependent and, by assumption, in general position;
this observation implies that $\ket{\alpha_3}^\sA \in \Span
\left\{\ket{\alpha_1}^\sA, \ket{\alpha_2}^\sA\right\}$,
so there are $x,y \in \CC \setminus \{0\}$ such that:
\begin{equation}
\ket{\alpha_3}^\sA = x \ket{\alpha_1}^\sA + y \ket{\alpha_2}^\sA,
\end{equation}
and substituting this in Eq.~\eqref{eq:proof_depen}, we find:
\begin{equation}
\left(x \ket{\alpha_1}^\sA + y \ket{\alpha_2}^\sA\right) \ket{\beta_3}^\sB =
a \ket{\alpha_1}^\sA \ket{\beta_1}^\sB +
b \ket{\alpha_2}^\sA \ket{\beta_2}^\sB,
\end{equation}
or equivalently,
\begin{equation}
\ket{\alpha_1}^\sA \left(x \ket{\beta_3}^\sB - a \ket{\beta_1}^\sB\right) =
\ket{\alpha_2}^\sA \left(b \ket{\beta_2}^\sB - y \ket{\beta_3}^\sB\right).
\end{equation}
Because we know that $\ket{\alpha_1}^\sA \ne \ket{\alpha_2}^\sA$,
this necessarily implies $x \ket{\beta_3}^\sB - a \ket{\beta_1}^\sB = 0$
and $b \ket{\beta_2}^\sB - y \ket{\beta_3}^\sB = 0$.
Because we know that $a,b,x,y \ne 0$, it holds that:
\begin{equation}
\frac{a}{x} \ket{\beta_1}^\sB = \ket{\beta_3}^\sB
= \frac{b}{y} \ket{\beta_2}^\sB,
\end{equation}
which contradicts the assumption that the three quantum states
$\ket{\beta_1}^\sB, \ket{\beta_2}^\sB, \ket{\beta_3}^\sB \in \cH_\sB$
are not all identical (up to normalisation and global phase).
Thus, the three bipartite states mentioned above,
$\ket{\alpha_1}^\sA \ket{\beta_1}^\sB ~ , ~
\ket{\alpha_2}^\sA \ket{\beta_2}^\sB ~ , ~
\ket{\alpha_3}^\sA \ket{\beta_3}^\sB$,
must be linearly independent, as we wanted.
\end{proof}

For the full analysis, we use the following notations:
the set of integer numbers from $1$ to $k$
is denoted by $[k]$ (formally, $[k] \triangleq \{1,2,\ldots,k\}$);
and for any subset $L \subseteq [k]$ and $1 \le i \le 3$, the product of all
states $\ket{\alpha_i^{(j)}}$ for all $j \in L$ is denoted $\ket{\alpha_i^{(L)}}$
(formally, $\ket{\alpha_i^{(L)}} \triangleq
\bigotimes_{j \in L} \ket{\alpha_i^{(j)}}$).
Similarly, for all $L \subseteq [k]$ we define the generalised local set
$A^{(L)}_\tsep$ as the span of all states $\ket{\alpha_i^{(L)}}$:
\begin{equation}\label{eq:ALsep_multi}
A^{(L)}_\tsep \triangleq \Span \left\{\ket{\alpha_1^{(L)}}, \ket{\alpha_2^{(L)}},
\ket{\alpha_3^{(L)}}\right\}.
\end{equation}

We can now present the Theorem which classifies
all three-dimensional subspaces of genuinely multipartite
systems ($k$-partite systems with $k \ge 3$, where none of the local subsystems
is one-dimensional) into two general cases,
corresponding to Fig.~\ref{fig:ctrit} (a triangle) and
Fig.~\ref{fig:qubit-cone} (a spherical cone):

\begin{theorem}
  \label{thm:multipartite}
  Given a multipartite Hilbert space
  $\cH_{\sA_1} \ox \cH_{\sA_2} \ox \cdots \ox \cH_{\sA_k}$
  and any three-dimensional subspace
  $S \subseteq \cH_{\sA_1} \ox \cH_{\sA_2} \ox \cdots \ox \cH_{\sA_k}$,
  let $S_\tsep$ be the subspace spanned by all product states in $S$
  (that is, $S_\tsep \triangleq \Span \{\ket{\psi} \in S ~ : ~
  \ket{\psi}\text{ is a product state}\}$).
  We would like to characterise the set of all fully separable states
  (both pure and mixed) over $S$, a set we denote by $\cD^S_\tsep$
  and formally define as follows:
  \begin{equation}
  \cD^S_\tsep \triangleq \{\rho \in \cD(S) ~ : ~ \rho\text{ is fully separable}\}.
  \end{equation}
  
  If $\dim S_\tsep \leq 2$, then $\cD^S_\tsep$
  belongs to one of the five classes described in Theorem~11 of~\cite{BLM}
  (or, equivalently, to one of the five multipartite generalisations
  of the classes described in Theorem~\ref{thm:BLM}).
  
  If $\dim S_\tsep = 3$ (so $S_\tsep=S$),
  then using the notations of Eqs.~\eqref{eq:S_multi}
  and~\eqref{eq:Ajsep_multi}:
  \begin{equation}\begin{split}
  S &= \Span \left\{
  \ket{\alpha_1^{(1)}} \otimes \cdots \otimes \ket{\alpha_1^{(k)}} ~ , ~
  \ket{\alpha_2^{(1)}} \otimes \cdots \otimes \ket{\alpha_2^{(k)}} ~ , ~
  \ket{\alpha_3^{(1)}} \otimes \cdots \otimes \ket{\alpha_3^{(k)}}\right\}, \\
  A^{(j)}_\tsep &\triangleq \Span \left\{\ket{\alpha_1^{(j)}},
  \ket{\alpha_2^{(j)}}, \ket{\alpha_3^{(j)}}\right\},
  \end{split}\end{equation}
  and assuming, without loss of generality (as explained above),
  that $\dim A^{(j)}_\tsep \in \{2,3\}$ for all $1 \le j \le k$,
  one of the two following cases occurs:

  \begin{enumerate}
    \item\label{multi_triangle} \textbf{Triangle:}
       The fully separable pure and mixed states over $S$
       are exactly all mixtures (convex combinations)
       of $\proj{\alpha_1^{(1)}} \ox \cdots \ox \proj{\alpha_1^{(k)}}$ ~ , ~
       $\proj{\alpha_2^{(1)}} \ox \cdots \ox \proj{\alpha_2^{(k)}}$,
       and $\proj{\alpha_3^{(1)}} \ox \cdots \ox \proj{\alpha_3^{(k)}}$;
       all the other pure and mixed states over $S$ are entangled.
       Formally:
       \begin{equation}\begin{split}
       \cD^S_\tsep = \operatorname{conv}
       &\left\{\proj{\alpha_1^{(1)}} \ox \cdots \ox \proj{\alpha_1^{(k)}} ~ , ~
       \proj{\alpha_2^{(1)}} \ox \cdots \ox \proj{\alpha_2^{(k)}} ~ , \right. \\
       &~~\left. \proj{\alpha_3^{(1)}} \ox \cdots \ox
       \proj{\alpha_3^{(k)}}\right\}.
       \end{split}\end{equation}
       This case is a generalisation of the bipartite cases (3,3), (2,3)-i,
       and (3,2)-i in Theorem~\ref{thm:main},
       and it is illustrated in Fig.~\ref{fig:ctrit}.

    \item\label{multi_cone} \textbf{Spherical cone:}
        There exists a subsystem $\mathcal{H}_{\sA_\ell}$
        (namely, there exists an index $1 \le \ell \le k$)
        such that out of the three states
        $\ket{\alpha_1^{([k] \setminus \{\ell\})}},
        \ket{\alpha_2^{([k] \setminus \{\ell\})}},
        \ket{\alpha_3^{([k] \setminus \{\ell\})}}$,
        two states are equal (without loss of generality,
        we can assume $\ket{\alpha_2^{([k] \setminus \{\ell\})}} =
        \ket{\alpha_3^{([k] \setminus \{\ell\})}}$),
        and the fully separable pure and mixed states over $S$
        are exactly all mixtures (convex combinations) of
        $\proj{\alpha_1^{(1)}} \ox \cdots \ox \proj{\alpha_1^{(k)}}$
        with any pure or mixed state over the space
        $\Span\left\{\ket{\alpha_2^{(\ell)}},
        \ket{\alpha_3^{(\ell)}}\right\}
        \ox \left\{\ket{\alpha_2^{([k]\setminus\{\ell\})}}\right\}$;
        all the other pure and mixed states over $S$ are entangled.
        Formally, in case $\ket{\alpha_2^{([k] \setminus \{\ell\})}} =
        \ket{\alpha_3^{([k] \setminus \{\ell\})}}$:
        \begin{equation}\begin{split}
        \cD^S_\tsep = &\operatorname{conv}
        \left[\left\{\proj{\alpha_1^{(1)}} \ox \cdots
        \ox \proj{\alpha_1^{(k)}}\right\} \right. \\
        &\left. ~~~~ \cup ~ \cD\left(\Span\left\{\ket{\alpha_2^{(\ell)}},
        \ket{\alpha_3^{(\ell)}}\right\} \ox
        \left\{\ket{\alpha_2^{([k]\setminus\{\ell\})}}\right\}\right)\right],
        \end{split}\end{equation}
        and symmetric results are obtained
        in case $\ket{\alpha_1^{([k] \setminus \{\ell\})}} =
        \ket{\alpha_2^{([k] \setminus \{\ell\})}}$
        or $\ket{\alpha_1^{([k] \setminus \{\ell\})}} =
        \ket{\alpha_3^{([k] \setminus \{\ell\})}}$.
        This case is a generalisation of the bipartite case (3,2)-ii
        (and its symmetric case (2,3)-ii) in Theorem~\ref{thm:main},
        and it is illustrated in Fig.~\ref{fig:qubit-cone}.
  \end{enumerate}
\end{theorem}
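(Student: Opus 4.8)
The plan is to reduce the $k$-partite problem to the bipartite Theorem~\ref{thm:main} by examining the bipartite cuts $\{\ell\}$ versus $[k]\setminus\{\ell\}$, feeding in Lemma~\ref{lemma:gp} to control the dimension of the ``rest'' space. Since $\cD^S_\tsep$ is the convex hull of the pure fully-product states in $S$, it suffices to determine those pure product states. A pure state is fully product if and only if it is a product state across \emph{every} cut, and across the cut at $\ell$ it has the form $\ket{\phi}^{(\ell)}\ox\ket{\chi}^{([k]\setminus\{\ell\})}$, which is \emph{fully} separable precisely when, in addition, $\ket{\chi}^{([k]\setminus\{\ell\})}$ is fully separable over the remaining $k-1$ parties. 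This observation turns the analysis into a recursion on the number of parties and lets us invoke Theorem~\ref{thm:main} at a single, well-chosen cut.

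First I would fix the governing dichotomy. For each pair $i\neq i'$ let $D_{i,i'}=\{j:\ket{\alpha_i^{(j)}}\neq\ket{\alpha_{i'}^{(j)}}\}$ be the set of parties in which the two spanning product states differ; since $\dim S=3$ they are distinct, so $|D_{i,i'}|\geq1$, and $\ket{\alpha_i^{([k]\setminus\{\ell\})}}=\ket{\alpha_{i'}^{([k]\setminus\{\ell\})}}$ holds exactly when $D_{i,i'}=\{\ell\}$, i.e.\ when $\{i,i'\}$ is ``adjacent'' via $\ell$. Using the standing assumption $\dim A^{(j)}_\tsep\geq 2$ for all $j$, I would show that \emph{at most one} pair can be adjacent: two adjacent pairs share an index, and a short case check (whether they are adjacent via the same party or via two different parties) produces, because $k\geq3$, a party at which all three $\ket{\alpha_i^{(j)}}$ coincide, forcing $\dim A^{(j)}_\tsep=1$, a contradiction. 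Thus the \emph{Triangle} case is exactly ``no pair adjacent'' and the \emph{Spherical-cone} case is ``exactly one pair, say $\{2,3\}$, adjacent, via a unique $\ell$''.

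The technical heart is a rank lemma: for every $\ell$ the three rest-states $\ket{\alpha_1^{([k]\setminus\{\ell\})}}, \ket{\alpha_2^{([k]\setminus\{\ell\})}}, \ket{\alpha_3^{([k]\setminus\{\ell\})}}$ are \emph{either} linearly independent \emph{or} two of them coincide; the intermediate possibility ``linearly dependent but in general position'' cannot occur. To prove it I assume the rest-states are pairwise distinct and deduce linear independence. Since $k\geq3$, the complement $[k]\setminus\{\ell\}$ has at least two parties; if some party $m\neq\ell$ is in general position I apply Lemma~\ref{lemma:gp} with $\sA$ the single party $m$ and $\sB$ the remaining parties (not all identical, as each $A^{(j)}_\tsep$ is at least $2$-dimensional). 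If \emph{no} single party is in general position, then every $m\neq\ell$ has a unique coinciding pair $P_m$, pairwise-distinctness forces two parties $m_1,m_2$ with $P_{m_1}\neq P_{m_2}$, a direct computation shows $\ket{\alpha_i^{(\{m_1,m_2\})}}$ are already linearly independent, and I then apply Lemma~\ref{lemma:gp} with the \emph{grouped} super-party $\sA=\{m_1,m_2\}$. I expect \textbf{this rank lemma to be the main obstacle}, precisely because of this last sub-case, where one must group parties before Lemma~\ref{lemma:gp} becomes applicable.

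With the rank lemma the two cases fall out. In the \emph{Triangle} case no two rest-states coincide, so $\dim A^{([k]\setminus\{\ell\})}_\tsep=3$ and each cut is bipartite type $(2,3)$ or $(3,3)$. If $A^{(\ell)}_\tsep$ is in general position the cut is $(3,3)$ or $(2,3)$-i and Theorem~\ref{thm:main} forces the only product states to be $\ket{\alpha_1},\ket{\alpha_2},\ket{\alpha_3}$; otherwise the cut is $(2,3)$-ii, where the product states across it form the point $\ket{\alpha_1}$ together with a qubit $\{\ket{\alpha_2^{(\ell)}}\}\ox\Span\{\ket{\alpha_2^{([k]\setminus\{\ell\})}}, \ket{\alpha_3^{([k]\setminus\{\ell\})}}\}$, and imposing full separability restricts its rest-factor to the fully separable states of that $2$-dimensional rest subspace. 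Because the adjacent-free hypothesis makes $\ket{\alpha_2^{([k]\setminus\{\ell\})}}$ and $\ket{\alpha_3^{([k]\setminus\{\ell\})}}$ differ in at least two parties (here genuine multipartiteness again excludes a local-qubit rest subspace, via the same dimension-one contradiction), the $2$-dimensional classification leaves only the two spanning states, recovering the triangle of Fig.~\ref{fig:ctrit}. In the \emph{Spherical-cone} case the adjacency gives $\ket{\alpha_2^{([k]\setminus\{\ell\})}}=\ket{\alpha_3^{([k]\setminus\{\ell\})}}$, so across the cut at $\ell$ the $B$-side is $2$-dimensional with coinciding generators; enumerating the product states (the point $\ket{\alpha_1}$ together with the local qubit $\Span\{\ket{\alpha_2^{(\ell)}}, \ket{\alpha_3^{(\ell)}}\}\ox\{\ket{\alpha_2^{([k]\setminus\{\ell\})}}\}$, plus ``off-axis'' states $\ket{\alpha_1^{(\ell)}}\ox(c\ket{\alpha_1^{([k]\setminus\{\ell\})}}+\lambda\ket{\alpha_2^{([k]\setminus\{\ell\})}})$) and then imposing full separability of the rest factor removes all off-axis states, since $\ket{\alpha_1^{([k]\setminus\{\ell\})}}$ and $\ket{\alpha_2^{([k]\setminus\{\ell\})}}$ differ in at least two parties. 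Taking convex hulls then yields the point-plus-qubit spherical cone of Fig.~\ref{fig:qubit-cone}, completing the classification.
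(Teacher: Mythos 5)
Your proposal is correct, and it runs on the same engine as the paper's proof---reduction to bipartite cuts, Theorem~\ref{thm:main} applied across them, and Lemma~\ref{lemma:gp} to control the dimension of the traced-out side---but it organises the case analysis in a genuinely different way. The paper branches on the pattern of local dimensions (how many $A^{(j)}_\tsep$ are $3$-dimensional: its Situations 1--3) and on the coinciding-pair sets $I_m$, and it chooses its cuts adaptively, sometimes grouping two parties into a super-party ($\{\ell,m\}$ versus the rest) precisely so that almost every cut it uses lands in one of the decisive bipartite cases (3,3), (2,3)-i or (3,2)-ii; only in its last sub-case must it revisit a partition-specific answer and impose full separability on a qubit factor. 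You instead extract two structural facts up front: the adjacency dichotomy (a pair $\{i,i'\}$ is adjacent when $\ket{\alpha_i}$ and $\ket{\alpha_{i'}}$ differ in exactly one party; at most one pair can be adjacent, by the dimension-one contradiction), so that Triangle $=$ no adjacent pair and Cone $=$ exactly one; and the rank lemma (for every $\ell$ the three rest-states are either linearly independent or two coincide, never ``dependent but in general position''), whose proof is where the paper's grouped-party trick reappears. With these you only ever use single-party cuts, accepting that they may land in the weaker bipartite cases (2,3)-ii or (2,2)-i, and you finish by recursing: imposing full separability on the $(k-1)$-partite rest factor and invoking the two-dimensional classification to eliminate the unwanted superpositions. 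The trade-off: the paper's case tree is longer but each leaf is settled by one application of Theorem~\ref{thm:main}, whereas your framing is conceptually tighter but makes the ``refine the rest factor'' step the rule rather than the exception.

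Two places need tightening, both one-line repairs. First, in the Cone case you assert that $\ket{\alpha_1^{([k]\setminus\{\ell\})}}$ and $\ket{\alpha_2^{([k]\setminus\{\ell\})}}$ differ in at least two parties; unlike the analogous assertion in your Triangle case, this does \emph{not} follow from adjacency-freeness (the party $\ell$ itself may be one of the parties where $\ket{\alpha_1}$ and $\ket{\alpha_2}$ differ), but from the standing assumption: for every $n\neq\ell$ one has $\ket{\alpha_2^{(n)}}=\ket{\alpha_3^{(n)}}$, so $\dim A^{(n)}_\tsep\geq 2$ forces $\ket{\alpha_1^{(n)}}\neq\ket{\alpha_2^{(n)}}$, hence the two rest-states differ in \emph{all} $k-1\geq 2$ parties of the complement. (Conversely, in the Triangle case the correct justification is adjacency-freeness itself, not the dimension-one argument your parenthetical gestures at.) Second, in Case B of your rank lemma, when the complement $[k]\setminus\{\ell\}$ consists of exactly the two parties $m_1,m_2$, there is no B-side left and Lemma~\ref{lemma:gp} is neither applicable nor needed---the direct computation already gives independence; when a nonempty B-side remains, note that its three states cannot be all identical (again by the standing assumption), so the lemma applies. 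With these points made explicit, the architecture is sound and delivers exactly the two classes of Theorem~\ref{thm:multipartite}.
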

\begin{remark}
  We point out that there is \emph{no} generalisation of neither bipartite cases
  (2,2)-i and (2,2)-ii of Theorem~\ref{thm:main}.
  Thus, both cases (2,2)-i and (2,2)-ii
  (illustrated in Figs.~\ref{fig:qubit+qubit}
  and~\ref{fig:phi-x-phi}, respectively) exist in bipartite systems
  but do not exist in (genuinely) multipartite systems.
\end{remark}
\begin{proof}
If $\dim S_\tsep \leq 2$, then applying Theorem~11 of~\cite{BLM}
to a two-dimensional subspace of $S$ which includes $S_\tsep$ proves that
the set $\cD^{S_\tsep}_\tsep$ of pure and mixed separable states over $S_\tsep$
belongs to one of the five classes described in~\cite{BLM}
for multipartite systems.
All the other pure and mixed states over $S$ (that are not states over $S_\tsep$)
must be entangled (that is, they cannot be fully separable).
Therefore, $\cD^S_\tsep = \cD^{S_\tsep}_\tsep$ indeed
belongs to one of the five classes described in~\cite{BLM}
for two-dimensional subspaces of multipartite systems,
which are direct generalisations of the classes
described in~\cite{BLM} (or in our Theorem~\ref{thm:BLM})
for two-dimensional subspaces of bipartite systems.

If $\dim S_\tsep = 3$, then $S_\tsep = S$,
in which case the proof typically applies Theorem~\ref{thm:main} to $S$
under specific bipartite \emph{partitions} (or \emph{cuts})
and analyses entanglement and separability with respect to these partitions.
For our $k$-partite system
$\cH_{\sA_1} \ox \cH_{\sA_2} \ox \cdots \ox \cH_{\sA_k}$ and the set of indexes
$[k] \triangleq \{1, 2, \ldots, k\}$, a bipartite partition is a pair $(X,Y)$
of two disjoint sets of indexes $X,Y \subseteq [k]$ such that $X \cup Y = [k]$
and $X \cap Y = \emptyset$. A state is said to be \emph{entangled (or separable)
with respect to the partition $(X,Y)$} if merging most subsystems and leaving
only the two large quantum subsystems $\cH_X \otimes \cH_Y \triangleq
\left( \bigotimes_{j \in X} \cH_{\sA_j} \right) \ox
\left( \bigotimes_{j \in Y} \cH_{\sA_j} \right)$
(reordering the quantum subsystems) results in
an entangled (or separable) bipartite state.
In particular, if a state is fully separable, it is separable with respect
to all possible bipartite partitions;
therefore, to prove that a multipartite state is entangled
(i.e., not fully separable), it is sufficient to
find a bipartite partition such that the state is entangled with respect to it.

We can now analyse the entanglement class of $S$ in three different situations,
given the dimensions of the $k$ local subspaces $A^{(1)}_\tsep, A^{(2)}_\tsep,
\ldots, A^{(k)}_\tsep$ (all dimensions are in $\{2, 3\}$)
defined in Eq.~\eqref{eq:Ajsep_multi}:

\begin{description}
\itemsep0em
\item[Situation 1: There are at least \emph{two} local subspaces
      of dimension $\mathbf{3}$]
       Formally, in this case, there exist $\ell, m \in [k]$ ($\ell \ne m$)
       such that $\dim A^{(\ell)}_\tsep = \dim A^{(m)}_\tsep = 3$.
       We can therefore analyse the entanglement with respect to
       the bipartite partition $(\{\ell\} ~ , ~ [k] \setminus \{\ell\})$:
       the dimensions of the relevant local subspaces
       are $\dim A^{(\ell)}_\tsep = 3$
       and $\dim A^{([k] \setminus \{\ell\})}_\tsep \ge \dim A^{(m)}_\tsep = 3$
       (according to Lemma~\ref{lemma:gp}), so the relevant case is (3,3).

       Therefore, applying Theorem~\ref{thm:main} (case (3,3))
       to $S$ under this partition
       implies that we are in case~\ref{multi_triangle} (a triangle):
       the fully separable pure and mixed states over $S$
       are exactly all mixtures (convex combinations)
       of $\proj{\alpha_1^{(1)}} \ox \cdots \ox \proj{\alpha_1^{(k)}}$ ~ , ~
       $\proj{\alpha_2^{(1)}} \ox \cdots \ox \proj{\alpha_2^{(k)}}$,
       and $\proj{\alpha_3^{(1)}} \ox \cdots \ox \proj{\alpha_3^{(k)}}$,
       and all the other pure and mixed states over $S$ are entangled.

\item[Situation 2: There is only \emph{one} local subspace
      of dimension $\mathbf{3}$]
        Formally, in this case, there exists $\ell \in [k]$
        such that $\dim A^{(\ell)}_\tsep = 3$,
        and for any other $m \in [k] \setminus \{\ell\}$ it holds that
        $\dim A^{(m)}_\tsep = 2$. We divide into two cases:
        \begin{enumerate}
        \item There is a subsystem $m \in [k] \setminus \{\ell\}$ such that
        the three states $\ket{\alpha_1^{(m)}}, \ket{\alpha_2^{(m)}},
        \ket{\alpha_3^{(m)}}$ are linearly dependent but in general position.
        In this case, if we apply Lemma~\ref{lemma:gp}
        we get that the three states $\ket{\alpha_1^{([k] \setminus \{\ell\})}},
        \ket{\alpha_2^{([k] \setminus \{\ell\})}},
        \ket{\alpha_3^{([k] \setminus \{\ell\})}}$ are linearly independent,
        so $\dim A^{([k] \setminus \{\ell\})}_\tsep = 3$.

        We can therefore analyse the entanglement with respect to
        the bipartite partition $(\{\ell\} ~ , ~ [k] \setminus \{\ell\})$:
        the dimensions of the relevant local subspaces are
        $\dim A^{(\ell)}_\tsep = \dim A^{([k] \setminus \{\ell\})}_\tsep = 3$,
        so the relevant case is (3,3).
        Therefore, applying Theorem~\ref{thm:main} (case (3,3))
        to $S$ under this partition
        implies that we are in case~\ref{multi_triangle} (a triangle),
        identically to Situation~1 above.
        \item For all subsystems $m \in [k] \setminus \{\ell\}$,
        two of the three states $\ket{\alpha_1^{(m)}}, \ket{\alpha_2^{(m)}},
        \ket{\alpha_3^{(m)}}$ are equal to one another. Formally, for each
        $m \in [k] \setminus \{\ell\}$ there exists a two-element subset
        $I_m \triangleq \{i_m, j_m\} \subset \{1,2,3\}$ such that
        $\ket{\alpha_{i_m}^{(m)}} = \ket{\alpha_{j_m}^{(m)}}$.
        We divide into two subcases:
        \begin{enumerate}
        \item If there exist two subsets $I_m$ and $I_n$
        which are different from one another (formally, if there exist
        $m, n \in [k] \setminus \{\ell\}$ such that $I_m \ne I_n$),
        then the three states $\ket{\alpha_1^{(m)}} \ket{\alpha_1^{(n)}} ~ , ~
        \ket{\alpha_2^{(m)}} \ket{\alpha_2^{(n)}} ~ , ~
        \ket{\alpha_3^{(m)}} \ket{\alpha_3^{(n)}}$ must be linearly independent.
        Therefore, according to Lemma~\ref{lemma:gp},
        the three states $\ket{\alpha_1^{([k] \setminus \{\ell\})}},
        \ket{\alpha_2^{([k] \setminus \{\ell\})}},
        \ket{\alpha_3^{([k] \setminus \{\ell\})}}$
        are, too, linearly independent,
        so $\dim A^{([k] \setminus \{\ell\})}_\tsep = 3$.

        We can therefore analyse the entanglement with respect to
        the bipartite partition $(\{\ell\} ~ , ~ [k] \setminus \{\ell\})$:
        the dimensions of the relevant local subspaces are
        $\dim A^{(\ell)}_\tsep = \dim A^{([k] \setminus \{\ell\})}_\tsep = 3$,
        so the relevant case is (3,3).
        Therefore, applying Theorem~\ref{thm:main} (case (3,3))
        to $S$ under this partition
        implies that we are in case~\ref{multi_triangle} (a triangle),
        identically to the two cases above.
        \item If all subsets $I_n$ are identical to each other
        for all $n \in [k] \setminus \{\ell\}$,
        then we can assume, without loss of generality, $I_n = \{2,3\}$.
        Then, for all $n \in [k] \setminus \{\ell\}$ it holds that
        $\ket{\alpha_2^{(n)}} = \ket{\alpha_3^{(n)}}$, which in particular
        implies $\ket{\alpha_2^{([k] \setminus \{\ell\})}} =
        \ket{\alpha_3^{([k] \setminus \{\ell\})}}$.
        Therefore, $\dim A^{([k] \setminus \{\ell\})}_\tsep = 2$.

        We can therefore analyse the entanglement with respect to
        the bipartite partition $(\{\ell\} ~ , ~ [k] \setminus \{\ell\})$:
        the dimensions of the relevant local subspaces
        are $\dim A^{(\ell)}_\tsep = 3$
        and $\dim A^{([k] \setminus \{\ell\})}_\tsep = 2$
        (where $\ket{\alpha_2^{([k] \setminus \{\ell\})}} =
        \ket{\alpha_3^{([k] \setminus \{\ell\})}}$
        are \emph{not} in general position),
        so the relevant case is (3,2)-ii.

        Therefore, applying Theorem~\ref{thm:main} (case (3,2)-ii)
        to $S$ under this partition
        implies that we are in case~\ref{multi_cone} (a spherical cone):
        the fully separable pure and mixed states over $S$
        are exactly all mixtures (convex combinations) of
        $\proj{\alpha_1^{(1)}} \ox \cdots \ox \proj{\alpha_1^{(k)}}$
        with any pure or mixed state over the space
        $\Span\left\{\ket{\alpha_2^{(\ell)}}, \ket{\alpha_3^{(\ell)}}\right\}
        \ox \left\{\ket{\alpha_2^{([k] \setminus \{\ell\})}}\right\}$,
        and all the other pure and mixed states over $S$ are entangled.
        \end{enumerate}
        \end{enumerate}

\item[Situation 3: \emph{All} local subspaces are of dimension $\mathbf{2}$]
        Formally, in this case, for all $j \in [k]$
        it holds that $\dim A^{(j)}_\tsep = 2$. We divide into three cases:
        \begin{enumerate}
        \item There are \emph{two} subsystems $\ell, m \in [k]$ such that
        the three states $\ket{\alpha_1^{(\ell)}}, \ket{\alpha_2^{(\ell)}},
        \ket{\alpha_3^{(\ell)}}$ are linearly dependent but in general position
        \emph{and} the three states $\ket{\alpha_1^{(m)}}, \ket{\alpha_2^{(m)}},
        \ket{\alpha_3^{(m)}}$ are linearly dependent but in general position.
        In this case, if we apply Lemma~\ref{lemma:gp}
        we get that the three states $\ket{\alpha_1^{([k] \setminus \{\ell\})}},
        \ket{\alpha_2^{([k] \setminus \{\ell\})}},
        \ket{\alpha_3^{([k] \setminus \{\ell\})}}$ are linearly independent,
        so $\dim A^{([k] \setminus \{\ell\})}_\tsep = 3$.

        We can therefore analyse the entanglement with respect to
        the bipartite partition $(\{\ell\} ~ , ~ [k] \setminus \{\ell\})$:
        the dimensions of the relevant local subspaces are
        $\dim A^{(\ell)}_\tsep = 2$
        (where the spanning states are in general position)
        and $\dim A^{([k] \setminus \{\ell\})}_\tsep = 3$,
        so the relevant case is (2,3)-i.
        Therefore, applying Theorem~\ref{thm:main} (case (2,3)-i)
        to $S$ under this partition
        implies that we are in case~\ref{multi_triangle}
        (a triangle), identically to three of the cases above.
        \item There is exactly \emph{one} subsystem $\ell \in [k]$ such that
        the three states $\ket{\alpha_1^{(\ell)}}, \ket{\alpha_2^{(\ell)}},
        \ket{\alpha_3^{(\ell)}}$ are linearly dependent but in general position.
        Thus, for all subsystems $m \in [k] \setminus \{\ell\}$,
        two of the three states $\ket{\alpha_1^{(m)}}, \ket{\alpha_2^{(m)}},
        \ket{\alpha_3^{(m)}}$ are equal to one another; formally, for each
        $m \in [k] \setminus \{\ell\}$ there exists a two-element subset
        $I_m \triangleq \{i_m, j_m\} \subset \{1,2,3\}$ such that
        $\ket{\alpha_{i_m}^{(m)}} = \ket{\alpha_{j_m}^{(m)}}$.
        We divide into two subcases:
        \begin{enumerate}
        \item If there exist two subsets $I_m$ and $I_n$
        which are different from one another (formally, if there exist
        $m, n \in [k] \setminus \{\ell\}$ such that $I_m \ne I_n$),
        then the three states $\ket{\alpha_1^{(m)}} \ket{\alpha_1^{(n)}} ~ , ~
        \ket{\alpha_2^{(m)}} \ket{\alpha_2^{(n)}} ~ , ~
        \ket{\alpha_3^{(m)}} \ket{\alpha_3^{(n)}}$ must be linearly independent.
        Therefore, according to Lemma~\ref{lemma:gp},
        the three states $\ket{\alpha_1^{([k] \setminus \{\ell\})}},
        \ket{\alpha_2^{([k] \setminus \{\ell\})}},
        \ket{\alpha_3^{([k] \setminus \{\ell\})}}$
        are, too, linearly independent,
        so $\dim A^{([k] \setminus \{\ell\})}_\tsep = 3$.

        We can therefore analyse the entanglement with respect to
        the bipartite partition $(\{\ell\} ~ , ~ [k] \setminus \{\ell\})$:
        the dimensions of the relevant local subspaces are
        $\dim A^{(\ell)}_\tsep = 2$
        (where the spanning states are in general position)
        and $\dim A^{([k] \setminus \{\ell\})}_\tsep = 3$,
        so the relevant case is (2,3)-i.
        Therefore, applying Theorem~\ref{thm:main} (case (2,3)-i)
        to $S$ under this partition
        implies that we are in case~\ref{multi_triangle}
        (a triangle), identically to four of the cases above.
        \item If all subsets $I_n$ are identical to each other
        for all $n \in [k] \setminus \{\ell\}$,
        then we can assume, without loss of generality, $I_n = \{2,3\}$.
        Let us choose an arbitrary $m \in [k] \setminus \{\ell\}$
        (so $\ket{\alpha_2^{(m)}} = \ket{\alpha_3^{(m)}}$),
        then, because $\ket{\alpha_1^{(\ell)}}, \ket{\alpha_2^{(\ell)}},
        \ket{\alpha_3^{(\ell)}}$ are in general position,
        the three states $\ket{\alpha_1^{(\ell)}}\ket{\alpha_1^{(m)}} ~ , ~
        \ket{\alpha_2^{(\ell)}}\ket{\alpha_2^{(m)}} ~ , ~
        \ket{\alpha_3^{(\ell)}}\ket{\alpha_3^{(m)}}$ are linearly independent
        according to Lemma~\ref{lemma:gp}.
        This implies that $\dim A^{(\{\ell,m\})}_\tsep = 3$.
        On the other hand, for all $n \in [k] \setminus \{\ell,m\}$ it holds that
        $\ket{\alpha_2^{(n)}} = \ket{\alpha_3^{(n)}}$, which in particular
        implies $\ket{\alpha_2^{([k] \setminus \{\ell,m\})}} =
        \ket{\alpha_3^{([k] \setminus \{\ell,m\})}}$.
        Therefore, $\dim A^{([k] \setminus \{\ell,m\})}_\tsep = 2$.

        We can therefore analyse the entanglement with respect to
        the bipartite partition $(\{\ell,m\} ~ , ~ [k] \setminus \{\ell,m\})$:
        the dimensions of the relevant local subspaces are
        $\dim A^{(\{\ell,m\})}_\tsep = 3$
        and $\dim A^{([k] \setminus \{\ell,m\})}_\tsep = 2$
        (where $\ket{\alpha_2^{([k] \setminus \{\ell,m\})}} =
        \ket{\alpha_3^{([k] \setminus \{\ell,m\})}}$
        are \emph{not} in general position),
        so the relevant case is (3,2)-ii.

        Therefore, applying Theorem~\ref{thm:main} (case (3,2)-ii)
        to $S$ under this partition
        implies that we are in case~\ref{multi_cone} (a spherical cone):
        the fully separable pure and mixed states over $S$
        are exactly all mixtures (convex combinations) of
        $\proj{\alpha_1^{(1)}} \ox \cdots \ox \proj{\alpha_1^{(k)}}$
        with any pure or mixed state over the space
        $\Span\left\{\ket{\alpha_2^{(\{\ell,m\})}},
        \ket{\alpha_3^{(\{\ell,m\})}}\right\}
        \ox \left\{\ket{\alpha_2^{([k] \setminus \{\ell,m\})}}\right\}
        = \Span\left\{\ket{\alpha_2^{(\ell)}}, \ket{\alpha_3^{(\ell)}}\right\}
        \ox \left\{\ket{\alpha_2^{([k] \setminus \{\ell\})}}\right\}$
        (this equality holds because $\ket{\alpha_2^{(m)}} =
        \ket{\alpha_3^{(m)}}$, and it proves that all those states
        are indeed fully separable),
        and all the other pure and mixed states over $S$ are entangled.
        \end{enumerate}
        \item There are \emph{no} subsystems $\ell \in [k]$ such that
        the three states $\ket{\alpha_1^{(\ell)}}, \ket{\alpha_2^{(\ell)}},
        \ket{\alpha_3^{(\ell)}}$ are in general position.
        Thus, for all subsystems $\ell \in [k]$,
        two of the three states $\ket{\alpha_1^{(\ell)}},
        \ket{\alpha_2^{(\ell)}}, \ket{\alpha_3^{(\ell)}}$
        are equal to one another; formally, for each
        $\ell \in [k]$ there exists a two-element subset
        $I_\ell \triangleq \{i_\ell, j_\ell\} \subset \{1,2,3\}$ such that
        $\ket{\alpha_{i_\ell}^{(\ell)}} = \ket{\alpha_{j_\ell}^{(\ell)}}$.

        The sets $I_\ell$ cannot be all identical to one another
        (for all $\ell \in [k]$), because then we would get $\dim S_\tsep = 2$.
        Therefore, there must exist $\ell, m \in [k]$ such that $I_\ell \ne I_m$.
        This means that the three states
        $\ket{\alpha_1^{(\ell)}} \ket{\alpha_1^{(m)}} ~ , ~
        \ket{\alpha_2^{(\ell)}} \ket{\alpha_2^{(m)}} ~ , ~
        \ket{\alpha_3^{(\ell)}} \ket{\alpha_3^{(m)}}$ are linearly independent,
        which means that $\dim A^{(\{\ell,m\})}_\tsep = 3$.
        We now divide into two subcases:
        \begin{enumerate}
        \item If there exist two subsystems $n,o \in [k] \setminus \{\ell, m\}$
        such that $I_n \ne I_o$, then the three states
        $\ket{\alpha_1^{(n)}} \ket{\alpha_1^{(o)}} ~ , ~
        \ket{\alpha_2^{(n)}} \ket{\alpha_2^{(o)}} ~ , ~
        \ket{\alpha_3^{(n)}} \ket{\alpha_3^{(o)}}$ are linearly independent,
        which means that $\dim A^{([k] \setminus \{\ell,m\})}_\tsep = 3$.

        We can therefore analyse the entanglement with respect to
        the bipartite partition $(\{\ell,m\} ~ , ~ [k] \setminus \{\ell,m\})$:
        the dimensions of the relevant local subspaces are
        $\dim A^{(\{\ell,m\})}_\tsep =
        \dim A^{([k] \setminus \{\ell,m\})}_\tsep = 3$,
        so the relevant case is (3,3).
        Therefore, applying Theorem~\ref{thm:main} (case (3,3))
        to $S$ under this partition
        implies that we are in case~\ref{multi_triangle}
        (a triangle), identically to five of the cases above.
        \item If all subsets $I_n$ are identical to each other
        for all $n \in [k] \setminus \{\ell, m\}$,
        then we can assume, without loss of generality, $I_n = \{2,3\}$.
        Then, for all $n \in [k] \setminus \{\ell, m\}$ it holds that
        $\ket{\alpha_2^{(n)}} = \ket{\alpha_3^{(n)}}$, which in particular
        implies $\ket{\alpha_2^{([k] \setminus \{\ell, m\})}} =
        \ket{\alpha_3^{([k] \setminus \{\ell, m\})}}$.
        Therefore, $\dim A^{([k] \setminus \{\ell, m\})}_\tsep = 2$.

        We can therefore analyse the entanglement with respect to
        the bipartite partition $(\{\ell, m\} ~ , ~ [k] \setminus \{\ell, m\})$:
        the dimensions of the relevant local subspaces are
        $\dim A^{(\{\ell,m\})}_\tsep = 3$
        and $\dim A^{([k] \setminus \{\ell,m\})}_\tsep = 2$
        (where $\ket{\alpha_2^{([k] \setminus \{\ell,m\})}} =
        \ket{\alpha_3^{([k] \setminus \{\ell,m\})}}$
        are \emph{not} in general position),
        so the relevant case is (3,2)-ii.

        Therefore, applying Theorem~\ref{thm:main} (case (3,2)-ii)
        to $S$ under this partition,
        we find that the separable pure and mixed states over $S$
        \emph{with respect to the bipartite partition
        $(\{\ell, m\} ~ , ~ [k] \setminus \{\ell, m\})$}
        (notice that these states are not necessarily fully separable!)
        are exactly all mixtures (convex combinations) of
        $\proj{\alpha_1^{(1)}} \ox \cdots \ox \proj{\alpha_1^{(k)}}$
        with any pure or mixed state over the space
        $\Span\left\{\ket{\alpha_2^{(\{\ell,m\})}},
        \ket{\alpha_3^{(\{\ell,m\})}}\right\}
        \ox \left\{\ket{\alpha_2^{([k] \setminus \{\ell,m\})}}\right\}
        = \Span\left\{\ket{\alpha_2^{(\ell)}}\ket{\alpha_2^{(m)}} ~ , ~
        \ket{\alpha_3^{(\ell)}}\ket{\alpha_3^{(m)}}\right\}
        \ox \left\{\ket{\alpha_2^{([k] \setminus \{\ell,m\})}}\right\}$,
        and all the other pure and mixed states over $S$
        are entangled with respect to this partition
        (and, thus, certainly not fully separable).

        Now we divide into two subsubcases:
        \begin{enumerate}
        \item If $I_n$ is equal to either $I_\ell$ or $I_m$
        (for all $n \in [k] \setminus \{\ell, m\}$),
        then without loss of generality, we can assume $I_n = I_m$.
        Therefore, $I_m = I_n = \{2, 3\}$,
        which means that $\ket{\alpha_2^{(m)}} = \ket{\alpha_3^{(m)}}$.

        The partition-specific result we reached above
        thus means that we are in case~\ref{multi_cone} (a spherical cone):
        the fully separable pure and mixed states over $S$
        are exactly all mixtures (convex combinations) of
        $\proj{\alpha_1^{(1)}} \ox \cdots \ox \proj{\alpha_1^{(k)}}$
        with any pure or mixed state over the space
        $\Span\left\{\ket{\alpha_2^{(\{\ell,m\})}},
        \ket{\alpha_3^{(\{\ell,m\})}}\right\}
        \ox \left\{\ket{\alpha_2^{([k] \setminus \{\ell,m\})}}\right\}
        = \Span\left\{\ket{\alpha_2^{(\ell)}}, \ket{\alpha_3^{(\ell)}}\right\}
        \ox \left\{\ket{\alpha_2^{([k] \setminus \{\ell\})}}\right\}$
        (this equality holds because $\ket{\alpha_2^{(m)}} =
        \ket{\alpha_3^{(m)}}$, and it proves that all those states
        are indeed fully separable),
        and all the other pure and mixed states over $S$ are entangled.
        \item If $I_\ell \ne I_n \ne I_m$
        (for all $n \in [k] \setminus \{\ell, m\}$),
        then in particular $I_\ell \ne \{2, 3\}$ and $I_m \ne \{2, 3\}$,
        which means that both $\ket{\alpha_2^{(\ell)}}, \ket{\alpha_3^{(\ell)}}$
        and $\ket{\alpha_2^{(m)}}, \ket{\alpha_3^{(m)}}$
        are linearly independent.

        Therefore, we can reconsider the meaning of
        the partition-specific result we reached above,
        which says that the only states over $S$
        that \emph{may} be fully separable are the mixtures of
        $\proj{\alpha_1^{(1)}} \ox \cdots \ox \proj{\alpha_1^{(k)}}$
        with any pure or mixed state over the local qubit space
        $\Span\left\{\ket{\alpha_2^{(\ell)}}\ket{\alpha_2^{(m)}} ~ , ~
        \ket{\alpha_3^{(\ell)}}\ket{\alpha_3^{(m)}}\right\}
        \ox \left\{\ket{\alpha_2^{([k] \setminus \{\ell,m\})}}\right\}$,
        and all the other pure and mixed states over $S$ are entangled.
        The only states in that local qubit space that \emph{could} be entangled
        (that is, not fully separable) are the non-trivial superpositions
        $\left(a\ket{\alpha_2^{(\ell)}}\ket{\alpha_2^{(m)}}
        + b\ket{\alpha_3^{(\ell)}}\ket{\alpha_3^{(m)}}\right)
        \ox \ket{\alpha_2^{([k] \setminus \{\ell,m\})}}$
        with $a, b \ne 0$; and we can indeed see that these states \emph{are}
        entangled, since both $\ket{\alpha_2^{(\ell)}}, \ket{\alpha_3^{(\ell)}}$
        and $\ket{\alpha_2^{(m)}}, \ket{\alpha_3^{(m)}}$
        are linearly independent, so a local invertible map
        (which does not affect entanglement and separability) can easily map
        them to orthonormal states:
        $\ket{\alpha_2^{(\ell)}} \mapsto \ket{1^{(\ell)}}$ ~ , ~
        $\ket{\alpha_3^{(\ell)}} \mapsto \ket{2^{(\ell)}}$ ~ , ~
        $\ket{\alpha_2^{(m)}} \mapsto \ket{1^{(m)}}$, and
        $\ket{\alpha_3^{(m)}} \mapsto \ket{2^{(m)}}$.
        The resulting state is thus
        $\left(a\ket{1^{(\ell)}}\ket{1^{(m)}}
        + b\ket{2^{(\ell)}}\ket{2^{(m)}}\right)
        \ox \ket{\alpha_2^{([k] \setminus \{\ell,m\})}}$,
        which is separable if and only if $a = 0$ or $b = 0$.
        Hence, the non-trivial superpositions in the local qubit space
        are indeed entangled.
        The remaining states are only the three states
        $\ket{\alpha_1^{(1)}} \ox \cdots \ox \ket{\alpha_1^{(k)}}$ ~ , ~
        $\ket{\alpha_2^{(1)}} \ox \cdots \ox \ket{\alpha_2^{(k)}}$, and
        $\ket{\alpha_3^{(1)}} \ox \cdots \ox \ket{\alpha_3^{(k)}}$,
        which are trivially product states.

        Thus, our analysis implies that we are in case~\ref{multi_triangle}
        (a triangle): the fully separable pure and mixed states over $S$
        are exactly all mixtures (convex combinations)
        of $\proj{\alpha_1^{(1)}} \ox \cdots \ox \proj{\alpha_1^{(k)}}$ ~ , ~
        $\proj{\alpha_2^{(1)}} \ox \cdots \ox \proj{\alpha_2^{(k)}}$,
        and $\proj{\alpha_3^{(1)}} \ox \cdots \ox \proj{\alpha_3^{(k)}}$,
        and all the other pure and mixed states over $S$ are entangled.
        \end{enumerate}
        \end{enumerate}
        \end{enumerate}
  \end{description}
  \vspace{-6.5mm}
\end{proof}

\section{Discussion}
\label{sec:discussion}
We have presented a full classification of entanglement and separability
in three-dimensional Hilbert subspaces of bipartite systems,
as well as a generalisation to multipartite systems.
This is a general classification that is independent of entanglement measures
and applies to all three-dimensional subspaces of any possible Hilbert space.
While our results are not aimed at deciding whether a single state is entangled
or separable (because that problem is easily solvable for rank-$3$ states
using the partial transpose criterion (PPT)~\cite{HLVC00},
and in some cases also using alternative criteria~\cite{Wootters}),
they are aimed at understanding the possible structures of entanglement and
separability in Hilbert spaces and their internal relations.

Our results generalise the findings of Boyer and two of the present authors
in~\cite{BLM}, which applied to two-dimensional subspaces.
In addition to the expected classes in three dimensions (mostly generalisations
or combinations of the classes from~\cite{BLM}),
we suggested an easy classification for each class
using the dimensions of local subspaces (along with information on whether
the spanning states of the subspaces are in general position, if applicable),
and we also found a few interesting classes that do not exist in two dimensions.

Our most interesting novel class, which is not similar to anything found in
two dimensions~\cite{BLM}, is named (2,2)-ii in Theorem~\ref{thm:main}
and is described in Fig.~\ref{fig:phi-x-phi}: it does not simply consist of
a finite number of product states or a complete Bloch sphere created by
degeneration of local eigenstates, like other classes,
but it includes all states of the form
$\ket{\psi}^\sA \ket{\psi}^\sB$ (up to local invertible operations)---that is,
all product states in the symmetric subspace (the triplet subspace),
as described in Section~\ref{sec:illustrations}.
The relation between the foundational phenomenon of the symmetric
subspace, appearing naturally here, and the possible classes of entanglement
in three dimensions and higher dimensions, is an intriguing topic
for future research.

Other possible directions for future research include extending our results
to four-dimensional subspaces
and higher-dimensional subspaces (where we may encounter more physical phenomena
not appearing here, including bound entanglement)
and finding more ways to utilise the geometric figures we suggested
in Section~\ref{sec:illustrations} for representing the sets of separable states
and their geometric features. 
It would also be interesting to know how the number of possible
classes of entanglement and separability (5 classes for two-dimensional
subspaces in~\cite{BLM}
and 14 classes for three-dimensional subspaces here) increases with the
dimension of the subspace, and in particular, whether the number of classes
is always finite.

One observation we can make based on Theorems~\ref{thm:BLM}
and~\ref{thm:main} is that for $2$- and $3$-dimensional subspaces, the number of
(pure) product states is either at most the dimension of the subspace
($2$ or $3$, respectively) or infinite
(cardinality of the continuum). It would be particularly interesting
to investigate the maximum finite number of pure product states in general
subspaces of arbitrary given dimension.

We have also generalised the results to multipartite systems. We point out that,
unlike the two-dimensional subspaces discussed in~\cite{BLM} where the bipartite
and multipartite cases are essentially equivalent, in three-dimensional
subspaces we found substantive differences between bipartite and multipartite
systems: most importantly, some of the classes for set of separable states
existing in bipartite systems completely disappear for (genuinely) multipartite
systems---including the interesting symmetric class (2,2)-ii mentioned above.
It could be interesting to find out why these classes disappear
in the multipartite case,
whether their disappearance hints at a foundational phenomenon
of quantum entanglement,
and whether similar differences between bipartite and multipartite
appear in higher dimensions, too.

\backmatter

\bmhead{Acknowledgements}

The authors thank Itai Arad for useful initial discussions on generalising
the results of \cite{BLM},
Ajit Iqbal Singh for discussions
on possible generalisations of the present work,
Kentaro Moto for insightful suggestions regarding identity and representations,
and the anonymous referees for significant additions to the text and
the references.
The work of RL and TM was partly supported
by the Israeli MOD Research and Technology Unit
and the Technion's Helen Diller Quantum Center (Haifa, Israel).
The work of RL was also partly supported by the Government of Spain 
(FIS2020-TRANQI and Severo Ochoa CEX2019-000910-S), Fundaci\'o Cellex, 
Fundaci\'o Mir-Puig, Generalitat de Catalunya (CERCA program), and the 
European Union NextGenerationEU.
AW is supported by the European Commission QuantERA grant ExTRaQT 
(Spanish MCIN project PCI2022-132965), by the Spanish MCIN
(project PID2019-107609GB-I00) with the support of FEDER funds, 
the Generalitat de Catalunya (project 2017-SGR-1127), by the Spanish 
MCIN with funding from European Union NextGenerationEU (PRTR-C17.I1) 
and the Generalitat de Catalunya, by the Alexander von Humboldt Foundation, 
and by the Institute for Advanced Study of the Technical University Munich. 

\bmhead{Data Availability Statement}
This manuscript has no associated data.

\section*{Declarations}

\bmhead{Conflict of interest}
The authors confirm they have no conflict of interest.

\bibliography{hgeometry}

\end{document}